\def\doi{10.1109/TAC.2025.3633537}
\newtheorem{assumption}{Assumption}
\newtheorem{definition}{Definition}
\newtheorem{theorem}{Theorem}
\newtheorem{proposition}{Proposition}
\newtheorem{corollary}{Corollary}
\newtheorem{remark}{Remark}
\def\BibTeX{{\rm B\kern-.05em{\sc i\kern-.025em b}\kern-.08em
    T\kern-.1667em\lower.7ex\hbox{E}\kern-.125emX}}
\begin{document}
\title{MHE under parametric uncertainty --- Robust state estimation without informative data}
\author{Simon Muntwiler, Johannes Köhler, Melanie N. Zeilinger
\thanks{The work of Simon Muntwiler was supported by the Bosch Research Foundation im Stifterverband. 
	This work has been supported by the Swiss National Science Foundation under NCCR Automation (grant agreement 51NF40\_180545).}
\thanks{The authors are with the Institute for Dynamic Systems and Control, ETH Zürich, Switzerland. (email: $\{\text{simonmu,jkoehle,mzeilinger}\}$@ethz.ch)}}

\maketitle
\thispagestyle{copyright}

\begin{abstract}
In this paper, we study joint state and parameter estimation for general nonlinear systems with uncertain parameters and persistent process and measurement noise. 
In particular, we are interested in stability properties of the resulting state estimate in the absence of persistency of excitation (PE).
With a simple academic example, we show that existing moving horizon estimation (MHE) approaches for joint state and parameter estimation as well as classical adaptive observers can result in diverging state estimates in the absence of PE, even if the noise is small.
We propose an MHE formulation involving a regularization based on a constant prior estimate of the unknown system parameters.
Only assuming the existence of a stable state estimator, we prove that the proposed MHE approach results in
practically robustly stable state estimates irrespective of PE.
We discuss the relation of the proposed MHE formulation to state-of-the-art results from MHE and adaptive estimation.
The properties of the proposed MHE approach are illustrated with a numerical example of a car with unknown tire friction parameters.
\end{abstract}

\begin{IEEEkeywords}
Estimation; Moving horizon estimation; Parametric uncertainty; Absence of persistency of excitation
\end{IEEEkeywords}

\section{Introduction}
State estimation is of high importance for control, system monitoring, and fault detection in many applications such as power systems~\cite{abur2004power}, battery management systems~\cite{Wang2020a}, manufacturing systems~\cite{Pasadyn2005}, or biomedical systems~\cite{Kashif2012}.
Obtaining an accurate estimate of the state in general requires access to a model which accurately describes the underlying system dynamics.
Identifying the parameters of such a model exactly is, however, difficult, e.g., due to the nonlinear nature of the models.
In addition, the parameters are often slowly varying, e.g., due to aging of components, and therefore practical applications require some form of online adaptation of the model parameters.
This renders simultaneous estimation of states and parameters of nonlinear systems from noisy output measurements a problem of high practical relevance, compare, e.g,~\cite{ Zanon2013,Boegli2013,Poloni2010,Abdollahpouri2017,Russo1999,Kuhl2011,Kupper2009,Valluru2017,Chen2012,Tuveri2023,kleyman2023state,Frick2012}.
However, the resulting estimation problem is nonlinear and non-convex even in the simple case of linear systems, see, e.g.,~\cite{Ljung1979,Gibson2005}, and thus challenging to solve.
Furthermore, unique identification of model parameters requires sufficiently informative data, typically through a so-called persistency of excitation (PE) condition (cf.~\cite{Anderson1977,Narendra1987,Willems2005,Cho1997}), even if noise-free state measurements are available.
Reliable parameter estimation is not possible in the absence of PE or, more generally, if the excitation is small compared to the noise, i.e., the signal to noise ratio is small.
For general nonlinear systems or over-parametrized models, ensuring PE is typically impossible. 
For linear systems, PE can be ensured using additional excitation~\cite{Willems2005}, however, this requires active deterioration of the regulation performance.
In this paper, we present an estimation method with guaranteed robust stability for general nonlinear systems with unknown parameters subject to process and measurement noise, without relying on a PE assumption. 
\subsubsection*{Related work}
Adaptive observers are classical methods for online state estimation under parametric uncertainty~\cite{Kudva1973,Kudva1974,Kreisselmeier1977,Shahrokhi1982,Ticlea2016,Bastin1988,Marino1995,Zhang2002,Marino2001,Tomei2022,Dey2022,Wang2020,Ortega2020,Katiyar2022}.
Assuming PE, convergence of state and parameter estimates resulting from such an adaptive observer can be shown for linear systems~\cite{Kudva1973,Kudva1974,Kreisselmeier1977,Shahrokhi1982,Ticlea2016} and selected classes of nonlinear systems such as SISO systems in observable canonical form~\cite{Bastin1988} or linearly parameterized SISO systems~\cite{Marino1995}.
Additionally, robust stability under process and measurements noise was established for adaptive observers for linear systems in~\cite{Marino2001,Zhang2002}.
Under a less restrictive initial excitation (IE) condition, e.g., dynamic regressor extension and mixing~\cite{Wang2020,Ortega2020} or concurrent learning~\cite{Ortega2020} lead to stable estimates in linear systems without noise.
Assuming bounded noise and IE, robustness of the estimation under process and measurement noise was analyzed for linear systems in~\cite{Katiyar2022} by employing a switching mechanism in the parameter estimator.
In the absence of PE, boundedness of state and parameter estimation errors in the noise-free case, i.e., without process and measurement noise, were shown for linear systems~\cite{Dey2022,Ortega2020} and a special class of nonlinear systems~\cite{Marino2001,Tomei2022}, where the nonlinearity in the systems dynamics depends only on inputs and measurements.
Ensuring robustness of state estimates under parametric uncertainty in the absence of PE is in general an open problem, which is challenging to address since PE is virtually necessary for robustness of parameter estimates~\cite{Johnstone1982}.
Overall, existing adaptive observer designs are mostly applicable to linear systems or restrictive classes of nonlinear systems, and typically lack robustness guarantees in the absence of PE.

More recently, joint state and parameter estimation for nonlinear systems has been addressed using optimization-based approaches, see, e.g.,~\cite{Simpson2023,Brouillon2022} for the case of linear systems.
Moving horizon estimation (MHE) is a promising optimization-based technique for estimation in general nonlinear systems subject to process and measurement noise~\cite{Rawlings2020,Schiller2023,Allan2020a}.
In case of perfectly known model parameters, uniform\footnote{Uniform in the control input.} robust stability of MHE for state estimation can be established with a sufficiently large horizon under a general detectability condition in the form of incremental input/output-to-state-stability ($\delta$-IOSS)~\cite{Allan2021a,Schiller2023}.
In this setting, it has been shown recently that $\delta$-IOSS is a necessary and sufficient condition for the existence of a robustly stable state estimator~\cite{Allan2021}.
In addition, MHE has been shown to perform well even in situations where standard approaches like an extended Kalman filter (EKF) lead to non-physical state estimates~\cite{haseltine2005critical}.
Compared to classical estimation approaches, MHE relies on increased computational power to solve a finite-horizon optimization problem online.
Consequently, this lead to the development of efficient solution strategies~\cite{Kuhl2011,Kouzoupis2016,Baumgartner2019,Muntwiler2022}.
Joint state and parameter estimation in MHE is commonly addressed by introducing an augmented state containing both the original system states and the unknown parameters~\cite{Robertson1996, Zanon2013,Boegli2013,Poloni2010,Abdollahpouri2017,Russo1999,Kuhl2011,Kupper2009,Valluru2017,Chen2012,Tuveri2023,kleyman2023state,Frick2012}.
Such MHE approaches are employed in different applications ranging from autonomous driving and friction estimation~\cite{Zanon2013,Boegli2013}, estimation in vibrating structures~\cite{Poloni2010,Abdollahpouri2017}, chemical processes~\cite{Russo1999,Kuhl2011,Kupper2009,Valluru2017}, biomedical systems~\cite{Chen2012,Tuveri2023,kleyman2023state}, to induction motors~\cite{Frick2012}.
Uniform detectability ($\delta$-IOSS) of the augmented state consisting of the system state and parameters in general requires a uniform PE condition~\cite{Sui2011,schiller2023nonlinear}, which can often not be ensured in practice~\cite{Zanon2013,Boegli2013,Poloni2010,Abdollahpouri2017,Russo1999,Kuhl2011,Kupper2009,Valluru2017,Chen2012,Tuveri2023,kleyman2023state,Frick2012}.
Consequently, the (robust) stability properties of these MHE schemes deteriorate arbitrarily if the excitation becomes small during online operation.
Additionally, applying an MHE for joint state and parameter estimation in the absence of PE can result in numerical issues.
While this has been addressed by modifications of the prior weighting in the MHE objective, compare~\cite{Sui2011,Baumgartner2022}, the resulting MHE schemes do not provide any stability guarantees in the absence of PE.
An MHE for joint estimation of states and parameters of a neural network system model, which generally suffers from lack of PE due to over-parametrization, involving a regularization on the parameter estimate was recently proposed in~\cite{lowenstein2023physics}.
However, theoretical properties of the resulting estimates were not established.
Using recent results from functional estimation~\cite{Muntwiler2023}, stability of the state estimate can be ensured even if the augmented state is not detectable.
However, this requires the use of a full information estimator (FIE), i.e., an optimization-based estimator with unbounded horizon.
The design of finite horizon MHE approaches with robust stability properties in the absence of PE for general nonlinear systems with parametric uncertainty and process and measurement noise remains an open question.
Note that an MHE approach for joint state and parameter estimation, which ensures bounded state estimation errors in the absence of PE, was proposed concurrently in~\cite{schiller2023nonlinear}.
While the proposed work uses a necessary and sufficient detectability condition, the \emph{robust} state detectability condition required in~\cite[Ass.~1]{schiller2023nonlinear} is generally more restrictive, compare also our discussion in Section~\ref{sec:detectability} below.
\subsubsection*{Contribution}
In this paper, we derive an MHE formulation for general nonlinear systems with parametric uncertainty subject to process and measurement noise which ensures reliable state estimation in the absence of PE, and consequently without unique identifiability of the parameters.
With a simple academic example, we show that standard MHE schemes for joint state and parameter estimation can lead to parameter drift in the absence of PE and thus lack fundamental robust stability properties (Section~\ref{sec:motivating_example}).
We also demonstrate that classical adaptive observers have the same limitation (Section~\ref{sec:discussion_adaptive}).
To overcome this, we propose to replace the standard prior weighting on the parameter estimate within the MHE formulation by a simple regularization involving a (constant) prior parameter estimate (Section~\ref{sec:estimation}).
We show practical robust stability of the resulting state estimation error under the proposed MHE approach.
Robust stability implies that the estimation error is proportional to the process and measurement noise. 
The stability is practical with respect to (w.r.t.) a constant involving the bias of the parameter regularization, which can be made arbitrarily small by increasing the horizon.
To establish robust stability, we only require a weak detectability condition on the system: There exists an estimator satisfying the desired robust stability property.
Hence, the proposed MHE approach is applicable whenever \emph{any} robustly stable estimator for the considered system exists.
Notably, satisfaction of this assumption does not require detectability of the augmented system state and hence does not impose a restrictive PE condition.
The presented theoretical analysis relies on recent techniques developed in the context of MHE~\cite{Schiller2023} and generalizations thereof to functional estimation~\cite{Muntwiler2023}.
We discuss the relation of the proposed MHE approach to classical MHE schemes and existing prior weightings to handle lack of PE, and connect the proposed ``regularization" to classical tools from adaptive control and estimation (Section~\ref{sec:discussion}).
Finally, we present a numerical example of a car with unknown tire friction parameters, with insufficient excitation over certain time periods (Section~\ref{sec:num}).
\subsubsection*{Notation}
Let the non-negative real numbers be denoted by $\mathbb{R}_{\geq 0}$, the set of integers by $\mathbb{I}$, the set of all integers greater than or equal to $a$ for some $a \in \mathbb{R}$ by $\mathbb{I}_{\geq a}$, and the set of integers in the interval $[a,b]$ for some $a,b\in\mathbb{R}$ with $a\le b$ by $\mathbb{I}_{[a,b]}$.
Let $\|x\|$ denote the Euclidean norm of the vector $x \in \mathbb{R}^n$.
A positive definite (semi-definite) matrix $P=P^\top$ is denoted as $P\succ 0$ ($P\succeq 0$).
The quadratic norm with respect to a positive definite matrix $Q=Q^\top$ is denoted by $\|x\|_Q^2=x^\top Q x$, and the minimal and maximal eigenvalues of $Q$ are denoted by $\lambda_{\min}(Q)$ and $\lambda_{\max}(Q)$, respectively.
The maximum generalized eigenvalue of positive definite matrices $A=A^\top$ and $B=B^\top$ is denoted as $\lambda_{\max}(A,B)$, i.e., the largest scalar $\lambda$ satisfying $\det (A-\lambda B) = 0$.
The identity matrix is denoted by $I_n\in\mathbb{R}^{n\times n}$. 
The $i$-th element of a vector $x\in\mathbb{R}^n$ is denoted as $[x]_i$ for $i \in \mathbb{I}_{[1,n]}$.

\section{Setup and Preliminaries} \label{sec:setup}
In this section, we formulate the considered state estimation problem, introduce the desired stability property, and state the necessary assumption to ensure the existence of an appropriate estimator.
\subsection{Problem Formulation} \label{sec:problem_formulation}
We consider a nonlinear discrete-time system
\begin{subequations}\label{eq:sys}
	\begin{align}
	x_{t+1}&=f(x_t,u_t,w_t,\theta), \label{eq:sys_1}\\
	y_t&=h(x_t,u_t,w_t,\theta), \label{eq:sys_2}
	\end{align}
\end{subequations}
where $t\in\mathbb{I}_{\ge 0}$ is the discrete time step, $x_t \in \mathbb{R}^{n_{\mathrm{x}}}$ the system state, $u_t\in\mathbb{U}=\mathbb{R}^{n_{\mathrm{u}}}$ the control input, $w_t\in\mathbb{R}^{n_{\mathrm{w}}}$ the process and measurement noise, $\theta \in \mathbb{R}^{n_\theta}$ a constant but unknown parameter, and $y_t\in\mathbb{Y}=\mathbb{R}^{n_{\mathrm{y}}}$ the measurement output.
Note that $w_t$ appears in the dynamics~\eqref{eq:sys_1} and measurement model~\eqref{eq:sys_2} and hence can also model separate process disturbances and measurement noise.

We consider the general case where we have additional system information in the form of constraints
\begin{align}\label{eq:constraints}
	x_t \in  \mathbb{X},\ w_t\in\mathbb{W},\ \forall t\in \mathbb{I}_{\ge 0},\ \theta \in \Theta,
\end{align}
with $\mathbb{X}\subseteq\mathbb{R}^{n_{\mathrm{x}}}$, $\mathbb{W}\subseteq\mathbb{R}^{n_{\mathrm{w}}}$, and $\Theta\subseteq\mathbb{R}^{n_\theta}$.
Note that this is not restrictive, since we can always choose $\mathbb{X}=\mathbb{R}^{n_{\mathrm{x}}}$, $\mathbb{W}=\mathbb{R}^{n_{\mathrm{w}}}$, and $\Theta = \mathbb{R}^{n_\theta}$ in case no additional system knowledge is available.
Constraints of the form~\eqref{eq:constraints} can be used to include additional physical prior knowledge, e.g., non-negativity of certain states, or bounds on noise and parameters, into a state estimator, compare~\cite[Sec.~4.3.3]{Rawlings2020}.
In the following, we assume that the true trajectory of system~\eqref{eq:sys} always satisfies the constraints~\eqref{eq:constraints}.

Given some available prior estimates $\bar{x}_0$ and $\bar{\theta}_0$ of the initial state $x_0$ and the unknown parameter $\theta$, respectively, the objective is to obtain an estimate $\hat{x}_t$ of the current system state $x_t$ at each time step $t\in\mathbb{I}_{\ge 0}$.
The considered problem is formalized by the following definition of a state estimator.
\begin{definition}[\protect{State estimator, adapted from~\cite[Def.~2.2]{Allan2021}}]\label{def:state_estimator}
	A state estimator is a sequence of functions $\Psi_t:\mathbb{X}\times\Theta\times\mathbb{W}^{t}\times\mathbb{Y}^{t}\rightarrow\mathbb{X}$ for $t\in\mathbb{I}_{\ge 0}$ to compute estimates of~\eqref{eq:sys_1} at each time step $t$ as
	\begin{align}\label{eq:state_estimator}
	\hat{x}_t = \Psi_t\left(\bar{x}_{0},\bar{\theta}_0,\{\bar{w}_j\}_{j=0}^{t-1},\{\bar{y}_j\}_{j=0}^{t-1}\right),
	\end{align}
	where $\bar{w}_j$ and $\bar{y}_j$ are estimates of the noise $w_j$ and measurements $y_j$ at time step $j$, respectively. 
\end{definition}

In a practical application, the noise estimates $\bar{w}_t$ are typically chosen to be zero and the output measurement $\bar{y}_j$ equal to the noisy output measurement $y_j$ according to~\eqref{eq:sys_2}.
The aim is to design a state estimator according to Definition~\ref{def:state_estimator} for system~\eqref{eq:sys} which satisfies the following stability notion.
\begin{definition}[Incremental input-to-output practical stability] \label{def:dIOpS}
	A state estimator according to Definition~\ref{def:state_estimator} is exponentially incrementally input-to-output practically stable ($\delta$-IOpS) with respect to some constant $\epsilon \ge 0$ if there exist $\lambda_1, \lambda_2, \lambda_3 \in [0,1)$ and $C_1, C_2, C_3 > 0$ such that
	\begin{align}
	\|x_t - \hat{x}_t\| \le& \max\left\{C_1\lambda_1^t\left\|\begin{bmatrix}
	x_0  \\
	\theta
	\end{bmatrix} - \begin{bmatrix}
	\bar{x}_0 \\
	\bar{\theta}_0
	\end{bmatrix}\right\|\right. ,\nonumber \\
	&\max_{j\in\mathbb{I}_{[0,t-1]}}C_2\lambda_2^{t-j-1}\|w_j-\bar{w}_j\|, \label{eq:ISpS_max_form} \\
	&\left. \max_{j\in\mathbb{I}_{[0,t-1]}}C_3\lambda_3^{t-j-1}\|y_j-\bar{y}_j\|, \epsilon \vphantom{\begin{bmatrix} x_0  \\ \theta\end{bmatrix}}\right\}. \nonumber
	\end{align}
	A state estimator according to Definition~\ref{def:state_estimator} is exponentially incrementally input-to-output stable ($\delta$-IOS) if additionally $\epsilon = 0$.
\end{definition}

This stability notion is an extension of the one defined in~\cite[Def.~3]{Muntwiler2023} in the context of functional estimation, compare also~\cite[Def.~6]{Limon2009} for classical input-to-state practical stability.
In contrast to classical stability, Inequality~\eqref{eq:ISpS_max_form} only provides a bound on the estimation error of the state $x_t$, but not the augmented state $\begin{bmatrix} x_t^\top & \theta^\top \end{bmatrix}^\top$.
Note that in case the true parameter is known, i.e., $\bar{\theta}_0 = \theta$, and~\eqref{eq:ISpS_max_form} is satisfied with $\epsilon = 0$, Definition~\ref{def:dIOpS} corresponds to a standard robust stability notion considered in MHE, compare~\cite[Def.~2]{Schiller2023}.
To design an $\delta$-IOpS state estimator, we require the following assumption toward the dynamics of the considered system. 
\begin{assumption}[Existence of stable state estimator]\label{ass:existence_of_estimator}
	There exists an exponentially $\delta$-IOS state estimator according to Definition~\ref{def:dIOpS}.
\end{assumption}

Note that only the existence of such an estimator is assumed, but it does not need to be known or have finite complexity.
In Section~\ref{sec:verify_ass} below, we establish the existence of such an estimator for a common class of systems.

\subsection{Detectability}\label{sec:detectability}

In the following, we introduce an incremental input/output-to-output-stability ($\delta$-IOOS) Lyapunov function, which characterizes existence of a stable estimator (Assumption~\ref{ass:existence_of_estimator}).

\begin{definition}[$\delta$-IOOS Lyapunov function\protect{~\cite[Def.~6]{Muntwiler2023}}]
	\label{def:dIOOS_Lyap}
	A function $W_\delta:\mathbb{R}^{n_{\mathrm{x}}}\times\mathbb{R}^{n_{\mathrm{x}}}\times\mathbb{R}^{n_\theta}\times\mathbb{R}^{n_\theta}\rightarrow\mathbb{R}_{\geq 0}$ is an exponential $\delta$-IOOS Lyapunov function if there exist $P_1,\ P_{2,\mathrm{x}},\ P_{2,\theta}\succ 0$, $Q,\ R\succeq 0$, and $\eta \in [0,1)$ such that
	\begin{subequations}
		\label{eq:dIOOS_Lyap}
		\begin{align}
		\label{eq:dIOOS_Lyap_1}
		&\|x-\tilde{x}\|_{P_1}^2 \le W_\delta(x,\tilde{x},\theta,\tilde{\theta}) \le \|x-\tilde{x}\|_{P_{2,\mathrm{x}}}^2 + \|\theta-\tilde{\theta}\|_{P_{2,\theta}}^2,\\ \nonumber
		&W_\delta\left(x^+, \tilde{x}^+, \theta, \tilde{\theta} \right) \leq \eta W_\delta(x,\tilde{x},\theta,\tilde{\theta}) \\
		&\hspace{4cm} +\|w-\tilde{w}\|_Q^2+\|y - \tilde{y}\|_{R}^2, \label{eq:dIOOS_Lyap_2}
		\end{align}
	\end{subequations}
	for all $\{x,u,w,y\}\ \in\ \mathbb{X}\times\mathbb{U}\times\mathbb{W}\times\mathbb{Y}$, $\{\tilde{x},u,\tilde{w},\tilde{y}\}\ \in\ \mathbb{X}\times\mathbb{U}\times\mathbb{W}\times\mathbb{Y}$, $\theta\in\Theta$, and $\tilde{\theta}\in\Theta$, where $x^+ = f(x,u,w,\theta)$, $\tilde{x}^+=f(\tilde{x},u,\tilde{w},\tilde{\theta})$, $y=h(x,u,w,\theta)$, and $\tilde{y}=h(\tilde{x},u,\tilde{w},\tilde{\theta})$.
\end{definition}

The following proposition shows that the existence of a $\delta$-IOOS Lyapunov function according to Definition~\ref{def:dIOOS_Lyap} for the system~\eqref{eq:sys} is a necessary and sufficient condition for Assumption~\ref{ass:existence_of_estimator} to be satisfied for the considered system.
\begin{proposition}\label{prop:existence_dIOOS_Lyap}
		Assumption~\ref{ass:existence_of_estimator} holds if and only if system~\eqref{eq:sys} admits an exponential $\delta$-IOOS Lyapunov function according to Definition~\ref{def:dIOOS_Lyap}.
\end{proposition}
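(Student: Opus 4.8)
The statement is an equivalence, and I would prove the two implications separately. Throughout, write $\phi_x(k;x_0,\theta,u,w)$ and $\phi_y(k;x_0,\theta,u,w)$ for the state and output at time $k$ of the solution of~\eqref{eq:sys} started in $x_0$ with parameter $\theta$, input sequence $u$, and noise sequence $w$. For the direction ``Lyapunov function $\Rightarrow$ Assumption~\ref{ass:existence_of_estimator}'', I would exhibit a full-information estimator (FIE), which qualifies as a state estimator in the sense of Definition~\ref{def:state_estimator} since no finite complexity is required, taking $\hat{x}_t$ as the time-$t$ state of the constraint-consistent trajectory $(\hat{x}_0,\hat\theta,\{\hat w_j\})$ that minimizes a discounted least-squares cost $\|\chi_0-\bar{x}_0\|^2+\|\vartheta-\bar\theta_0\|^2+\sum_{j=0}^{t-1}\eta^{-j}\big(\|v_j-\bar{w}_j\|^2+\|\bar{y}_j-\phi_y(j;\chi_0,\vartheta,u,v)\|^2\big)$, the discount matching the contraction rate $\eta$ in~\eqref{eq:dIOOS_Lyap_2}. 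The steps are: (i) the true trajectory is feasible, so the optimal cost is bounded by its own cost, a geometrically weighted combination of $\|x_0-\bar{x}_0\|$, $\|\theta-\bar\theta_0\|$, $\|w_j-\bar{w}_j\|$, $\|y_j-\bar{y}_j\|$; (ii) iterating~\eqref{eq:dIOOS_Lyap_2} along the pair (true trajectory, optimal trajectory), which share the input and have constant $\theta$, yields $W_\delta(x_t,\hat{x}_t,\theta,\hat\theta)\le\eta^t W_\delta(x_0,\hat{x}_0,\theta,\hat\theta)+\sum_{j=0}^{t-1}\eta^{t-1-j}(\|w_j-\hat w_j\|_Q^2+\|y_j-\hat y_j\|_R^2)$; (iii) the lower bound in~\eqref{eq:dIOOS_Lyap_1} turns the left side into $\lambda_{\min}(P_1)\|x_t-\hat{x}_t\|^2$, while the upper bound in~\eqref{eq:dIOOS_Lyap_1}, the triangle inequality, and the cost estimate from (i) bound each term on the right by a geometrically weighted square of one of the ``input'' quantities of Definition~\ref{def:dIOpS} — crucially, the parameter prior $\|\theta-\bar\theta_0\|$ enters only through the $\eta^t$-weighted slot, so it may persist without destroying stability, which is exactly why detectability of $\theta$ is not needed; (iv) converting the quadratic-sum bound into the max-form~\eqref{eq:ISpS_max_form} via $\sum_j\eta^{t-1-j}a_j\le(1-\sqrt\eta)^{-1}\max_j(\sqrt\eta)^{t-1-j}a_j$, the bound of a maximum of nonnegative terms by their sum, and $\sqrt{a+b}\le\sqrt{a}+\sqrt{b}$, gives $\delta$-IOS ($\epsilon=0$) with all rates equal to $\sqrt\eta$.

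For the converse ``Assumption~\ref{ass:existence_of_estimator} $\Rightarrow$ Lyapunov function'', let $\Psi$ be an exponentially $\delta$-IOS estimator. I would first distill a trajectory-based $\delta$-IOOS bound: for two constraint-consistent solutions sharing an input $u$, $x_k=\phi_x(k;x_0,\theta,u,w)$ and $\tilde{x}_k=\phi_x(k;\tilde{x}_0,\tilde\theta,u,\tilde w)$ with outputs $y_k,\tilde y_k$, apply Definition~\ref{def:dIOpS} with $\epsilon=0$ to the trajectory generated by $(\tilde x_0,\tilde\theta,u,\tilde w)$ while feeding $\Psi$ its own data $(\tilde x_0,\tilde\theta,\{\tilde w_j\},\{\tilde y_j\})$: all input-mismatch terms vanish, forcing $\Psi_t(\tilde x_0,\tilde\theta,\{\tilde w_j\},\{\tilde y_j\})=\tilde x_t$; applying Definition~\ref{def:dIOpS} once more with the trajectory generated by $(x_0,\theta,u,w)$ as the true one and the \emph{same} estimator input then gives $\|x_t-\tilde x_t\|\le\max\{C_1\lambda_1^t\|(x_0-\tilde x_0,\theta-\tilde\theta)\|,\ \max_j C_2\lambda_2^{t-j-1}\|w_j-\tilde w_j\|,\ \max_j C_3\lambda_3^{t-j-1}\|y_j-\tilde y_j\|\}$. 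I would then invoke the standard converse construction, as in~\cite{Muntwiler2023}: with $\mu:=\max\{\lambda_1^2,\lambda_2^2,\lambda_3^2\}<1$, set $W_\delta(x,\tilde x,\theta,\tilde\theta):=\sup_{t\ge 0}\sup_{u,w,\tilde w}\big(\mu^{-t}\|\phi_x(t;x,\theta,u,w)-\phi_x(t;\tilde x,\tilde\theta,u,\tilde w)\|^2-\sum_{k=0}^{t-1}\mu^{-(k+1)}(C_2^2\|w_k-\tilde w_k\|^2+C_3^2\|\phi_y(k;x,\theta,u,w)-\phi_y(k;\tilde x,\tilde\theta,u,\tilde w)\|^2)\big)$, the inner supremum over constraint-keeping sequences. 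Taking $t=0$ gives $\|x-\tilde x\|^2\le W_\delta$, i.e.~\eqref{eq:dIOOS_Lyap_1} with $P_1=I$; substituting the distilled bound and using $\mu\ge\lambda_i^2$ makes the subtracted sums absorb the noise/output contributions, leaving $W_\delta\le C_1^2(\|x-\tilde x\|^2+\|\theta-\tilde\theta\|^2)$, which is the upper bound in~\eqref{eq:dIOOS_Lyap_1} with $P_{2,x}=P_{2,\theta}=C_1^2 I$; and~\eqref{eq:dIOOS_Lyap_2} with $\eta=\mu$, $Q=C_2^2 I$, $R=C_3^2 I$ follows from the one-step-shift identity $\phi_x(t;x^+,\theta,u',w')=\phi_x(t+1;x,\theta,(u,u'),(w,w'))$, which pulls out a factor $\mu$ and the first-stage penalty $C_2^2\|w-\tilde w\|^2+C_3^2\|y-\tilde y\|^2$.

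The main obstacle is not conceptual but the bookkeeping of passing back and forth between the max/exponential form of Definition~\ref{def:dIOpS} and the quadratic form of Definition~\ref{def:dIOOS_Lyap}: matching the decay rates $\lambda_1,\lambda_2,\lambda_3$ and $\eta$ (one unavoidably moves to a fixed root of $\eta$), and, above all, producing the \emph{asymmetric} sandwich~\eqref{eq:dIOOS_Lyap_1} whose lower bound involves $x$ alone while the upper bound involves $(x,\theta)$. Concretely, on the sufficiency side the delicate choice is the discount in the FIE cost so that the parameter-prior term lands in the $\eta^t$-slot rather than having to decay, and on the necessity side the work is checking that the supremum defining $W_\delta$ is finite and satisfies the dissipation inequality. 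Both arguments are close adaptations of the MHE and functional-estimation techniques of~\cite{Schiller2023,Muntwiler2023}.
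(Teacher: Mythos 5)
Your proposal is correct and follows essentially the same route as the paper: the paper proves Proposition~\ref{prop:existence_dIOOS_Lyap} by viewing the problem as functional estimation with output $z_t=x_t$ and citing~\cite{Muntwiler2023} (Cor.~8 for the equivalence of $\delta$-IOOS and the Lyapunov function, Prop.~5 for necessity, Cor.~10/Thm.~9 for sufficiency via FIE), and your two directions are precisely unpacked versions of those cited arguments — the discounted FIE construction with the prior-mismatch term absorbed in the $\eta^t$-slot for sufficiency, and the estimator-fed-its-own-exact-data argument yielding trajectory-wise $\delta$-IOOS followed by the standard converse sup-construction for necessity. The only caveat is that the technical bookkeeping you defer (finiteness of the supremum, constraint handling in the dissipation inequality, rate matching) is exactly what the cited results of~\cite{Muntwiler2023} take care of, so nothing genuinely new or divergent is introduced.
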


\begin{proof}
		The considered problem is a special case of functional estimation studied in~\cite{Muntwiler2023} with system state $\begin{bmatrix} x_t^\top & \theta^\top \end{bmatrix}^\top$ and corresponding functional output $z_t = \phi(x_t,\theta) = x_t$ to be estimated.
		Resulting from~\cite[Cor.~8]{Muntwiler2023}, system~\eqref{eq:sys} admits an exponential $\delta$-IOOS Lyapunov function according to Definition~\ref{def:dIOOS_Lyap} if and only if the system is exponentially $\delta$-IOOS according to~\cite[Def.~4]{Muntwiler2023}.
		Additionally, it was shown in~\cite{Muntwiler2023} that exponential $\delta$-IOOS\footnote{Both \cite[Prop.~5]{Muntwiler2023} and \cite[Cor.~10]{Muntwiler2023} consider a more general asymptotic form of $\delta$-IOOS, but the proof naturally extends to the case of exponential $\delta$-IOOS.} is both necessary (cf.~\cite[Prop.~5]{Muntwiler2023}) and sufficient (cf.~\cite[Cor.~10]{Muntwiler2023}) for Assumption~\ref{ass:existence_of_estimator} to be satisfied.
		Consequently, Assumption~\ref{ass:existence_of_estimator} is necessary and sufficient for the existence of an exponential $\delta$-IOOS Lyapunov function.
\end{proof}

Note that we restrict our analysis to an \emph{exponential-like} version of $\delta$-IOOS here.
In~\cite[Thm.~9]{Muntwiler2023} a more general asymptotic version of $\delta$-IOOS was used to establish stability of a FIE approach for functional estimation.
Notably, also without parametric uncertainty, establishing stability for an MHE scheme with finite horizon generally requires a more restrictive (exponential) condition, compare~\cite[Sec.~5.5.3]{Allan2020a} and~\cite[Rem.~4]{Schiller2023}.

A related detectability condition was introduced concurrently in~\cite[Ass.~1]{schiller2023nonlinear}.
Since the detectability condition invoked by Assumption~\ref{ass:existence_of_estimator} is necessary and sufficient, the condition in~\cite{schiller2023nonlinear} \emph{cannot} be less restrictive. 
In particular, \cite[Ass.~1]{schiller2023nonlinear} considers a \emph{robust} state detectability, which uses a $\delta$-IOSS Lyapunov function for the state and holds uniformly and robustly w.r.t. the parameters $\theta$.
In contrast, Proposition~\ref{prop:existence_dIOOS_Lyap} shows that the existence of a stable estimator under parametric uncertainty is equivalent to a $\delta$-IOOS Lyapunov function of the augmented state $(x,\theta)$. 
In the trivial case of a single possible model parameter $\Theta=\{\theta\}$, these two definitions coincide and reduce to a classical state detectability condition~\cite{Allan2021}, but for the considered problem of uncertain model parameters they differ.

\subsection{Standard MHE for State and Parameter Estimation}\label{sec:standard_MHE}
In the following, we introduce a standard MHE formulation to jointly estimate states and parameters, which is widely applied, see, e.g.,~\cite{Robertson1996,Zanon2013,Boegli2013,Poloni2010,Abdollahpouri2017,Russo1999,Kuhl2011,Valluru2017,Kupper2009,Chen2012,Frick2012,kleyman2023state,Tuveri2023}.
The MHE approach at time step~$t$ considers past input data $\{u_j\}_{j=t-M_t}^{t-1}$ and estimates of past noises and outputs $\{\bar{w}_j,\bar{y}_j\}_{j=t-M_t}^{t-1}$ within a  window of length $M_t = \min\{t,M\}$, with horizon $M \in \mathbb{I}_{\ge 0}$, and the past state and parameter estimates\footnote{Typically, this choice is denoted as filtering prior, cf.~\cite[Chap.~4]{Rawlings2020}.}, $\hat{x}_{t-M_t}$ and $\hat{\theta}_{t-M_t}$, respectively.
Thereby, the MHE scheme optimizes over a sequence of $M_t$ disturbance estimates $\hat{w}_{\cdot | t} = \{\hat{w}_{j|t}\}_{j=t-M_t}^{t-1}$ and the initial state and the parameter estimates, $\hat{x}_{t-M_t|t}$ and~$\hat{\theta}_{t}$, respectively.
Using the system model~\eqref{eq:sys}, the disturbance sequence, initial state estimate, and parameter estimate define a sequence of state $\{\hat{x}_{j|t}\}_{j=t-M_t}^t$ and output $\{\hat{y}_{j|t}\}_{j=t-M_t}^{t-1}$ estimates.
The cost function is chosen as
\begin{align}
&V_{\mathrm{MHE}}^{\mathrm{x},\theta}(\hat{x}_{t-M_t|t},\hat{w}_{\cdot|t},\hat{y}_{\cdot|t},\hat{\theta}_{t},t) = \nonumber \\
& \quad 2\eta^{M_t}\left(\|\hat{x}_{t-M_t|t}-\hat{x}_{t-M_t} \|_{P_{\mathrm{x}}}^2 + \| \hat{\theta}_{t} - \hat{\theta}_{t-M_t} \|_{P_{\theta}}^2 \right) \label{eq:MHE_objective_standard} \\
&\quad +2\sum_{j=1}^{M_t}\eta^{j-1}\left(\|\hat{w}_{t-j|t}-\bar{w}_{t-j}\|_Q^2 +\|\hat{y}_{t-j|t}-\bar{y}_{t-j}\|_R^2 \right), \nonumber
\end{align}
with a possible discount factor $\eta\in [0,1]$ and weighting matrices $P_{\mathrm{x}},\ P_{\theta},\ R,\ Q\succ 0$.
The two first terms in~\eqref{eq:MHE_objective_standard} involving the past state and parameter estimates, $\hat{x}_{t-M_t}$ and $\hat{\theta}_{t-M_t}$, are called \textit{prior weighting} on the state and parameter, respectively.
The MHE objective~\eqref{eq:MHE_objective_standard} is most commonly chosen without discounting, i.e., with $\eta=1$~\cite{Robertson1996,Zanon2013,Boegli2013,Poloni2010,Abdollahpouri2017,Russo1999,Kuhl2011,Valluru2017,Kupper2009,Chen2012,Frick2012,Tuveri2023}.
The state and parameter estimates at time step $t$ are then obtained by solving the following nonlinear program (NLP)

\begin{subequations}\label{eq:MHE_state_and_parameter}
	\begin{align}\label{eq:MHE_state_and_parameter_cost}
	\min_{\hat{x}_{t-M_t|t},\hat{w}_{\cdot|t},\hat{\theta}_{t}}& V_{\mathrm{MHE}}^{\mathrm{x},\theta}(\hat{x}_{t-M_t|t},\hat{w}_{\cdot|t},\hat{y}_{\cdot|t},\hat{\theta}_{t},t) \\ \label{eq:MHE_state_and_parameter_1}
	\text{s.t. }&\hat{x}_{j+1|t}=f(\hat{x}_{j|t},u_{j},\hat{w}_{j|t},\hat{\theta}_{t}),\ j\in\mathbb{I}_{[t-M_t,t-1]},\\
	\label{eq:MHE_state_and_parameter_2}
	&\hat{y}_{j|t}=h(\hat{x}_{j|t},u_{j},\hat{w}_{j|t},\hat{\theta}_{t}),\ j\in\mathbb{I}_{[t-M_t,t-1]},\\
	\label{eq:MHE_state_and_parameter_3}
	&\hat{w}_{j|t} \in\mathbb{W},\ \hat{y}_{j|t}\in\mathbb{Y},\  j\in\mathbb{I}_{[t-M_t,t-1]}, \\
	\label{eq:MHE_state_and_parameter_4}
	&\hat{x}_{j|t}\in\mathbb{X},\ j\in\mathbb{I}_{[t-M_t,t]}, \\
	\label{eq:MHE_state_and_parameter_5}
	&\hat{\theta}_{t} \in \Theta.
	\end{align}
\end{subequations}
We denote a (not necessarily unique) minimizer of~\eqref{eq:MHE_state_and_parameter} as $\hat{x}_{t-M_t|t}^*$, $\hat{w}_{\cdot|t}^*$, and $\hat{\theta}_{t}^*$, with the corresponding state and output sequences denoted as $\hat{x}_{\cdot|t}^*$ and $\hat{y}_{\cdot|t}^*$, respectively.
The resulting state and parameter estimates at time step $t$ are denoted as
$\hat{x}_t = \hat{x}_{t|t}^*$ and $\hat{\theta}_t = \hat{\theta}_{t}^*$, respectively.
The MHE scheme~\eqref{eq:MHE_state_and_parameter} is applied in receding horizon fashion.
Thereby, the current state and parameter estimates are obtained at each time step~$t$ by solving the (finite-horizon) MHE problem~\eqref{eq:MHE_state_and_parameter} based on the $M_t$ most recent inputs, noise and output estimates, and the prior weightings based on the state and parameter estimates at time step $t-M_t$.

Stability of such an MHE scheme can be established using tools from~\cite{Schiller2023} if the augmented state $\begin{bmatrix} x_t^\top & \theta^\top \end{bmatrix}^\top$ is detectable, the horizon length $M$ is chosen sufficiently large, and if $\eta \in [0,1)$.
Notably, as shown in~\cite{Sui2011,schiller2023nonlinear}, detectability of the augmented state involves satisfaction of a PE condition, compare Remark~\ref{rem:PE} below for a formal definition of PE.
In contrast, the detectability condition considered in this paper (Assumption 1) does not depend on a PE condition, and thus holds uniformly for \emph{any} sequence of inputs.

\begin{remark}[Persistency of excitation]\label{rem:PE}
		Suppose for simplicity that the system~\eqref{eq:sys} is linearly parameterized
		\begin{subequations}\label{eq:sys_lin_param}
			\begin{align}
				x_{t+1} &= \tilde{f}(x_t,u_t) + G(x_t,u_t)\theta + E_{\mathrm{x}}w_t, \\
				y_t &= \tilde{h}(x_t,u_t,w_t),
			\end{align}
		\end{subequations}
		with $E_{\mathrm{x}} = \begin{bmatrix}I_n & 0 \end{bmatrix}\in \mathbb{R}^{n_{\mathrm{x}}\times n_{\mathrm{w}}}$. 
		Then, we have persistence of excitation~\cite{Johnstone1982} over a time window $M$ if there exists a positive constant $c > 0$ such that
		\begin{align}\label{eq:excitation_matrix}
			PE_t \coloneqq \sum_{k=t-M}^{t}G_k^\top G_k \succeq c I, \quad \forall t \in \mathbb{I}_{\ge M},
		\end{align}
		with $G_k = G(x_k,u_k)$.
		In case of state measurements, the accuracy of parameter estimates can be bounded based on the signal to noise ratio (SNR), i.e., the minimal singular value of $PE_t$~\eqref{eq:excitation_matrix} divided by the maximum singular value of the empirical process noise covariance $\Sigma_t^{\mathrm{w}} = \sum_{t-M}^t E_{\mathrm{x}}w_tw_t^\top E_{\mathrm{x}}^\top \in\mathbb{R}^{n_{\mathrm{w}}\times n_{\mathrm{w}}}$.
		In case this ratio is always sufficiently large, a standard MHE approach for state and parameter estimation can be used to simultaneously estimate states and parameters.
		However, such an approach can yield arbitrarily bad estimates in case the SNR is small for some time, compare the examples in Sections~\ref{sec:motivating_example} and~\ref{sec:num}.
		Note that in general~\eqref{eq:excitation_matrix} cannot be verified since only noisy output measurements are available.
\end{remark}
\section{Motivating Example}\label{sec:motivating_example}

\begin{figure}[t]
	\includegraphics[width=\columnwidth]{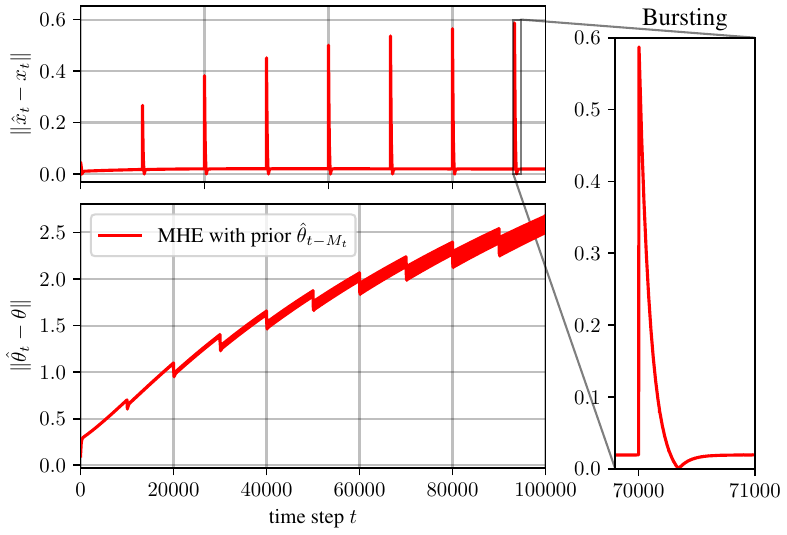}
	\caption{Academic example (Section~\ref{sec:motivating_example}): Absolute state and parameter estimation errors resulting from a standard MHE approach for joint state and parameter estimation~\eqref{eq:MHE_state_and_parameter}.}
	\label{fig:motivating_example}
\end{figure}

In this section, we provide an academic example to demonstrate that standard MHE formulations for joint state and parameter estimation (Section~\ref{sec:standard_MHE}) lack fundamental robust stability properties in the absence of PE.
The presented example is inspired by~\cite[Sec.~V]{Sethares1986}, which demonstrates parameter drift in adaptive filtering.
Consider the system
\begin{subequations}\label{eq:sys_academic}
	\begin{align}
	x_{t+1} =& a x_t + u_t + [w_t]_1, \\
	y_t =& \begin{bmatrix}
	x_t + [w_t]_2 \\
	\theta x_t + [w_t]_3
	\end{bmatrix},
	\end{align}
\end{subequations}
with $a=0.99$, $[w_t]_1 = 0$, $[w_t]_2 = -0.1$, and $[w_t]_3 = 0.1$ for all $t\in\mathbb{I}_{\ge 0}$.
Thereby, the constant values of $[w_t]_2$ and $[w_t]_3$ can be interpreted as sensor biases, which often occur in practice.
Further consider $x_0 = 1$, $\bar{x}_0 = 1$, $\theta = 1$, and $\bar{\theta}_0 = 1.1$, and the MHE formulation~\eqref{eq:MHE_state_and_parameter} with $M=20$, $\eta=0.99$, $Q=I_{3}$, $R=I_{2}$, $P_{\mathrm{x}}=10$, $P_\theta = 10$, and $\bar{w}_t = 0$ and $\bar{y}_t = y_t$ for all $t\in\mathbb{I}_{\ge 0}$.
Note that
\begin{align*}
	x_{t} = a\left([y_{t-1}]_1-[w_{t-1}]_2\right) + u_{t-1} + [w_{t-1}]_1,
\end{align*}
implies that a stable state estimator exists and hence that system~\eqref{eq:sys_academic} admits an exponential $\delta$-IOOS Lyapunov function according to Definition~\ref{def:dIOOS_Lyap} with any $\eta\in (0,1)$ (compare Proposition~\ref{prop:existence_dIOOS_Lyap}).
The input $u_t$ is chosen as
\begin{align*}
	u_t = \begin{cases}
	1 & t\mod 10000 = 0, \\
	0 & \text{else}.
	\end{cases}
\end{align*}
The input is zero over long periods of time and the state asymptotically converges to zero.
Hence, the ratio of the excitation signal compared to the noise becomes very small.
As shown in Figure~\ref{fig:motivating_example}, the parameter diverges for such a standard MHE formulation\footnote{Note that all MHE approaches applied to this motivating example are implemented in Python using CasADi~\cite{andersson2019casadi} and the solver IPOPT~\cite{Waechter2005}.} for joint state and parameter estimation.
Consequently, whenever the output contains informative data (when the input is equal to~$1$), the drifted parameter estimate leads to an increasing state estimation error.
This phenomena is known as bursting in classical adaptive control~\cite{Anderson1985,Middleton1988} and is depicted in the zoom-in in Figure~\ref{fig:motivating_example}.
Thus, robust stability of the resulting state and parameter estimate is \emph{not} guaranteed.
Note that this problem equally occurs if an MHE scheme~\eqref{eq:MHE_state_and_parameter} without discounting ($\eta = 1$) is applied.
In the next section, we show how to overcome this issue by replacing the prior weighting with a regularization on the (constant) prior parameter estimate $\bar{\theta}_0$.
In particular, we follow ideas from classical adaptive control to prevent ill-conditioned parameter estimation problems, compare our discussion in Section~\ref{sec:discussion_adaptive} below.
\section{Moving Horizon Estimation in the Absence of Persistency of Excitation} \label{sec:estimation}
In this section, we introduce an MHE formulation (Section~\ref{sec:MHE_formulation}) to estimate the state of a nonlinear dynamical system subject to parametric uncertainty in the absence of~PE.
Furthermore, we prove that the proposed MHE is $\delta$-IOpS according to Definition~\ref{def:dIOpS} (Section~\ref{sec:theoretical_analysis}), which ensures that the stability issue in the absence of PE, as discussed in the previous section, is overcome (Section~\ref{sec:motivating_example_revisited}).

\subsection{Moving Horizon Estimation Formulation}\label{sec:MHE_formulation}
Equivalently to the standard MHE formulation for joint state and parameter estimation stated in Section~\ref{sec:standard_MHE}, our proposed MHE at time step~$t$ considers past input data $\{u_j\}_{j=t-M_t}^{t-1}$ and estimates of past disturbances and outputs $\{\bar{w}_j,\bar{y}_j\}_{j=t-M_t}^{t-1}$ in a  window of length\footnote{The use of the window $M_t$ allows us to consider both the initial phase with $t\in\mathbb{I}_{[0,M-1]}$, where we rely on all available measurements, and $t\in \mathbb{I}_{\geq M}$, where we use a fixed horizon $M$.} $M_t = \min\{t,M\}$, with horizon $M \in \mathbb{I}_{\ge 0}$, and the past state estimate $\hat{x}_{t-M_t}$.
However, instead of using the past parameter estimate $\hat{\theta}_{t-M_t}$, the proposed MHE formulation relies on an available prior $\bar{\theta}_0$ of the unknown parameter.
The objective of the optimization-based state estimation problem is chosen as
\begin{align}
		&V_{\mathrm{MHE}}^{\mathrm{x}}(\hat{x}_{t-M_t|t},\hat{w}_{\cdot|t},\hat{y}_{\cdot|t},\hat{\theta}_{t},t) = \nonumber \\
		& \quad 2\eta^{M_t}\left(\|\hat{x}_{t-M_t|t}-\hat{x}_{t-M_t} \|_{P_{2,\mathrm{x}}}^2 + \| \hat{\theta}_{t} - \bar{\theta}_0 \|_{P_{2,\theta}}^2 \right) \label{eq:MHE_objective} \\
		&\quad +2\sum_{j=1}^{M_t}\eta^{j-1}\left(\|\hat{w}_{t-j|t}-\bar{w}_{t-j}\|_Q^2 +\|\hat{y}_{t-j|t}-\bar{y}_{t-j}\|_R^2 \right), \nonumber
\end{align}
where $\eta\in [0,1)$.
The discount factor $\eta$ and weighting matrices $P_{2,\mathrm{x}}$, $P_{2,\theta}$, $R$, and $Q$ are chosen based on the exponential $\delta$-IOOS Lyapuonv function according to Definition~\ref{def:dIOOS_Lyap}, which is guaranteed to exist due to Assumption~\ref{ass:existence_of_estimator} and Proposition~\ref{prop:existence_dIOOS_Lyap}.
In Remark~\ref{rem:MHE_objective} below, we discuss how to design the cost~\eqref{eq:MHE_objective} in a practical setting where a $\delta$-IOOS Lyapunov function is not known.
The state estimate at time step $t$ is then obtained by solving the following NLP

\begin{subequations}\label{eq:MHE_parametric}
	\begin{align}\label{eq:MHE_parametric_cost}
	\min_{\hat{x}_{t-M_t|t},\hat{w}_{\cdot|t},\hat{\theta}_t}& V_{\mathrm{MHE}}^{\mathrm{x}}(\hat{x}_{t-M_t|t},\hat{w}_{\cdot|t},\hat{y}_{\cdot|t},\hat{\theta}_{t},t) \\ \label{eq:MHE_parametric_1}
	\text{s.t. }&\eqref{eq:MHE_state_and_parameter_1},\eqref{eq:MHE_state_and_parameter_2},\eqref{eq:MHE_state_and_parameter_3},\eqref{eq:MHE_state_and_parameter_4},\eqref{eq:MHE_state_and_parameter_5}.
	\end{align}
\end{subequations}
We denote a (non-unique) minimizer\footnote{We assume that a minimizer to~\eqref{eq:MHE_state_and_parameter} always exists. Existence of such a minimizer can be ensured under mild assumptions, e.g., if $P_{2,\mathrm{x}}, P_{2,\theta}, Q, R\succ 0$, or the sets $\mathbb{W},\mathbb{Y},\mathbb{X},\Theta$ are compact, cf.~\cite[App.~A.11]{Rawlings2020}.\label{foot:minimier}} of~\eqref{eq:MHE_parametric} as $\hat{x}_{t-M_t|t}^*$, $\hat{w}_{\cdot|t}^*$, and $\hat{\theta}_{t}^*$, with the corresponding state and output sequences denoted as $\hat{x}_{\cdot|t}^*$ and $\hat{y}_{\cdot|t}^*$, respectively.
The resulting state estimate at time step~$t$ is denoted as
\begin{align}
	\hat{x}_t = \hat{x}_{t|t}^*, \label{eq:MHE_estimate}
\end{align}
and the corresponding estimation error as
\begin{align}
	\hat{e}_t = \hat{x}_t - x_t. \label{eq:MHE_estimate_error}
\end{align}
The MHE approach~\eqref{eq:MHE_parametric} is then applied in a receding horizon fashion.
Thereby, the state estimate~\eqref{eq:MHE_estimate} at each time step $t$ is obtained by solving the optimization problem~\eqref{eq:MHE_parametric} based on the $M_t$ most recent inputs, noise and output estimates, and the prior weighting based on the state estimate at time step~$t-M_t$.
 
\subsection{Theoretical Analysis}\label{sec:theoretical_analysis}
In the following, we establish that the MHE approach~\eqref{eq:MHE_parametric} is an exponentially $\delta$-IOpS estimator according to Definition~\ref{def:dIOpS} provided the horizon $M$ is chosen larger than some lower bound.
\begin{theorem}[MHE is $\delta$-IOpS]\label{thm:MHE_dIOpS}
	Let Assumption~\ref{ass:existence_of_estimator} hold and suppose the horizon $M$ satisfies
	\begin{align}\label{eq:M_cond}
		\rho^M \coloneqq 4\eta^{M}\lambda_{\max}(P_{2,\mathrm{x}},P_{1}) < 1,
	\end{align}
	with $\rho \in [0,1)$.
	Then, the MHE estimator~\eqref{eq:MHE_parametric} is exponentially $\delta$-IOpS according to Definition~\ref{def:dIOpS}, i.e,~\eqref{eq:ISpS_max_form} holds for all $t \in \mathbb{I}_{\ge 0}$ with (uniform) constants $C_1,\ C_2,\ C_3 > 0$, and
	\begin{align} \label{eq:MHE_ISpS_eps_lambda}
		\epsilon =&\ \frac{4}{\sqrt{1-\rho}}\sqrt{\frac{\lambda_{\max}(P_{2,\theta})}{\lambda_{\min}(P_1)}}\sqrt{\rho}^M\| \bar{\theta}_0 - \theta \|, \\
		\lambda_1 =&\ \sqrt{\rho},\ \lambda_2 = \lambda_3 = \sqrt[4]{\rho}. \nonumber
	\end{align}
\end{theorem}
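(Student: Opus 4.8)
The plan is to follow the dissipativity-based MHE stability analysis of~\cite{Schiller2023}, carried out with the $\delta$-IOOS Lyapunov function $W_\delta$ from~\cite{Muntwiler2023}, which exists by Assumption~\ref{ass:existence_of_estimator} and Proposition~\ref{prop:existence_dIOOS_Lyap}. The key structural point is that the parameter prior $\|\hat{\theta}_t-\bar{\theta}_0\|_{P_{2,\theta}}^2$ in~\eqref{eq:MHE_objective} is anchored at the \emph{fixed} $\bar{\theta}_0$, so the true trajectory --- $\hat{x}_{t-M_t|t}=x_{t-M_t}$, $\hat{w}_{j|t}=w_j$, $\hat{\theta}_t=\theta$, which propagates to $\hat{x}_{j|t}=x_j$ and $\hat{y}_{j|t}=y_j$ --- is always a feasible candidate for~\eqref{eq:MHE_parametric}. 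First I would use optimality, $V_{\mathrm{MHE}}^{\mathrm{x}}(\mathrm{opt})\le V_{\mathrm{MHE}}^{\mathrm{x}}(\mathrm{cand})$, with
\[
V_{\mathrm{MHE}}^{\mathrm{x}}(\mathrm{cand}) = 2\eta^{M_t}\bigl(\|\hat{e}_{t-M_t}\|_{P_{2,\mathrm{x}}}^2 + \|\theta-\bar{\theta}_0\|_{P_{2,\theta}}^2\bigr) + 2\sum_{j=1}^{M_t}\eta^{j-1}\bigl(\|w_{t-j}-\bar{w}_{t-j}\|_Q^2 + \|y_{t-j}-\bar{y}_{t-j}\|_R^2\bigr),
\]
using $\hat{x}_{t-M_t}-x_{t-M_t}=\hat{e}_{t-M_t}$, and $\hat{x}_0=\bar{x}_0$ to cover the case $M_t<M$.

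Second, I would iterate the dissipation inequality~\eqref{eq:dIOOS_Lyap_2} along the optimal trajectory versus the true one over the estimation window and apply the lower bound in~\eqref{eq:dIOOS_Lyap_1} to obtain $\|\hat{e}_t\|_{P_1}^2\le\eta^{M_t}W_\delta(\hat{x}_{t-M_t|t}^*,x_{t-M_t},\hat{\theta}_t^*,\theta)+\sum_{j=1}^{M_t}\eta^{j-1}(\|\hat{w}_{t-j|t}^*-w_{t-j}\|_Q^2+\|\hat{y}_{t-j|t}^*-y_{t-j}\|_R^2)$. Applying the upper bound in~\eqref{eq:dIOOS_Lyap_1} to $W_\delta(\hat{x}_{t-M_t|t}^*,x_{t-M_t},\hat{\theta}_t^*,\theta)$ and splitting each term via $\|a+b\|_P^2\le 2\|a\|_P^2+2\|b\|_P^2$ around, respectively, $\hat{x}_{t-M_t}$, $\bar{\theta}_0$, $\bar{w}_{t-j}$, and $\bar{y}_{t-j}$, the pieces depending on the optimizer recombine exactly into $V_{\mathrm{MHE}}^{\mathrm{x}}(\mathrm{opt})\le V_{\mathrm{MHE}}^{\mathrm{x}}(\mathrm{cand})$, which yields the window recursion
\[
\|\hat{e}_t\|_{P_1}^2 \le 4\eta^{M_t}\|\hat{e}_{t-M_t}\|_{P_{2,\mathrm{x}}}^2 + 4\eta^{M_t}\|\bar{\theta}_0-\theta\|_{P_{2,\theta}}^2 + 4\sum_{j=1}^{M_t}\eta^{j-1}\bigl(\|w_{t-j}-\bar{w}_{t-j}\|_Q^2+\|y_{t-j}-\bar{y}_{t-j}\|_R^2\bigr).
\]
With $\|\hat{e}_{t-M_t}\|_{P_{2,\mathrm{x}}}^2\le\lambda_{\max}(P_{2,\mathrm{x}},P_1)\|\hat{e}_{t-M_t}\|_{P_1}^2$ and~\eqref{eq:M_cond}, this is, for $t\ge M$, a genuine contraction $\|\hat{e}_t\|_{P_1}^2\le\rho^M\|\hat{e}_{t-M}\|_{P_1}^2+(\mathrm{bias})+(\mathrm{noise})$.

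Third, I would unroll this recursion along the subsequence $t,t-M,t-2M,\dots$ down to the first index in $\mathbb{I}_{[0,M]}$, terminating at $\hat{e}_0=\bar{x}_0-x_0$. Two elementary facts carry the bookkeeping: $P_1\preceq P_{2,\mathrm{x}}$ (set $\tilde{\theta}=\theta$ in~\eqref{eq:dIOOS_Lyap_1}), hence $\lambda_{\max}(P_{2,\mathrm{x}},P_1)\ge 1$ and $4\eta^M\le\rho^M$; and consequently $\rho\ge\eta$, so $\eta^{j-1}\le\rho^{j-1}$ and the per-window weight on the disturbance at absolute time $s$ telescopes to $\le\rho^{t-s-1}$. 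The bias term accumulates geometrically to $\tfrac{4\eta^M}{1-\rho^M}\|\bar{\theta}_0-\theta\|_{P_{2,\theta}}^2\le\tfrac{\rho^M}{1-\rho}\|\bar{\theta}_0-\theta\|_{P_{2,\theta}}^2$; the disturbance terms, handled separately for the $Q$- and $R$-weighted parts, accumulate as $\sum_s\rho^{t-s-1}\|w_s-\bar{w}_s\|_Q^2\le\tfrac{\lambda_{\max}(Q)}{1-\sqrt{\rho}}\max_s\bigl(\sqrt{\rho}^{t-s-1}\|w_s-\bar{w}_s\|^2\bigr)$ and likewise for the outputs; and the initial-error term is bounded by $4\lambda_{\max}(P_{2,\mathrm{x}},P_1)\rho^{-M}\rho^t\|\hat{e}_0\|_{P_1}^2$. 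Finally, dividing by $\lambda_{\min}(P_1)$, bounding $\|v\|_{P_{2,\cdot}}^2\le\lambda_{\max}(P_{2,\cdot})\|v\|^2$, taking square roots via $\sqrt{a+b+c+d}\le\sqrt{a}+\sqrt{b}+\sqrt{c}+\sqrt{d}\le\max\{4\sqrt{a},4\sqrt{b},4\sqrt{c},4\sqrt{d}\}$, and using $\|\hat{e}_0\|\le\|[x_0^\top\ \theta^\top]^\top-[\bar{x}_0^\top\ \bar{\theta}_0^\top]^\top\|$, I arrive at~\eqref{eq:ISpS_max_form} with $\lambda_1=\sqrt{\rho}$, $\lambda_2=\lambda_3=\sqrt[4]{\rho}$, uniform constants $C_1,C_2,C_3>0$ that absorb the $W_\delta$-parameters and $\rho,M$, and $\epsilon$ as in~\eqref{eq:MHE_ISpS_eps_lambda}.

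The main obstacle is precisely this last step: unrolling the quadratic window recursion and re-summing it into the $\max$-form $\delta$-IOpS estimate with the exact rates $\sqrt{\rho}$, $\sqrt[4]{\rho}$ and the stated $\epsilon$ requires careful tracking of the index remainder $t-M\lfloor t/M\rfloor$, the conversion of geometric sums of weighted squared disturbances into a single decaying maximum, and repeated use of $\sqrt{\sum_i a_i}\le\sum_i\sqrt{a_i}$ --- this is exactly the technical core of~\cite[Thm.~1]{Schiller2023} and its functional-estimation counterpart in~\cite{Muntwiler2023}, which I would reuse rather than redo. The one genuinely new feature worth stressing is that, because the parameter is regularized toward the \emph{constant} prior $\bar{\theta}_0$ instead of the drifting estimate $\hat{\theta}_{t-M_t}$, the term $4\eta^{M_t}\|\bar{\theta}_0-\theta\|_{P_{2,\theta}}^2$ enters the recursion as a fixed offset rather than as part of an initial condition, which is exactly why it surfaces as the practical-stability constant $\epsilon$ and is driven to zero as $M\to\infty$ through the factor $\sqrt{\rho}^{M}$.
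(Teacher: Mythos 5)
Your proposal is correct and follows essentially the same route as the paper's proof in Appendix~\ref{app:proof_thm_dIOpS}: feasibility of the true trajectory to bound the optimal cost, iterating the $\delta$-IOOS dissipation inequality~\eqref{eq:dIOOS_Lyap_2} with Young-type splittings around $\hat{x}_{t-M_t}$, $\bar{\theta}_0$, $\bar{w}$, $\bar{y}$, closing the window recursion via $\lambda_{\max}(P_{2,\mathrm{x}},P_1)$ and condition~\eqref{eq:M_cond}, unrolling over $t=kM+l$ with a geometric accumulation of the constant $\bar{\theta}_0$-bias into $\epsilon$, and converting the sum bound to the max form with rates $\sqrt{\rho}$ and $\sqrt[4]{\rho}$. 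The only cosmetic differences (running the recursion in $\|\hat{e}_t\|_{P_1}^2$ rather than in $W_\delta$, and a slightly looser $\rho^{-M}$ factor absorbed into $C_1$) do not change the argument or the resulting constants.
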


The proof of Theorem~\ref{thm:MHE_dIOpS} can be found in Appendix~\ref{app:proof_thm_dIOpS}.
As a consequence of~\eqref{eq:MHE_ISpS_eps_lambda}, the resulting bound on the state estimation error~\eqref{eq:ISpS_max_form} improves with increasing horizon length~$M$ and in case the prior estimate $\bar{\theta}_0$ is closer to the true (unknown) parameter $\theta$.
Note that choosing the horizon length $M$ poses a trade-off between computational complexity and accuracy, while robustness is preserved provided Condition~\eqref{eq:M_cond} is satisfied.
For $M$ going to infinity, it follows that the MHE approach in~\eqref{eq:MHE_parametric} is an exponentially $\delta$-IOS estimator according to Definition~\ref{def:dIOpS}, even for arbitrarily bad prior parameter estimates~$\bar{\theta}_0$.
Consequently, FIE, i.e., Problem~\eqref{eq:MHE_parametric} with $M_t=t$, is exponentially $\delta$-IOS.
\begin{corollary}[FIE is $\delta$-IOS]\label{cor:FIE_dIOS}
	Let Assumption~\ref{ass:existence_of_estimator} hold. Then, the FIE estimator~\eqref{eq:MHE_parametric} with $M_t=t$ is exponentially $\delta$-IOS according to Definition~\ref{def:dIOpS}.
\end{corollary}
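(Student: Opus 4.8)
The plan is to observe that FIE is exactly the NLP~\eqref{eq:MHE_parametric} with the time-varying horizon $M_t=t$, so that at every time step there is a \emph{single} estimation window $[0,t]$ whose prior weighting is anchored at the genuine priors $\hat{x}_0=\bar{x}_0$ and $\bar{\theta}_0$. Morally this is the ``$M\to\infty$'' regime of Theorem~\ref{thm:MHE_dIOpS}: the offset $\epsilon$ in~\eqref{eq:MHE_ISpS_eps_lambda} scales with $\sqrt{\rho}^{M}$, and when the effective horizon at time $t$ equals $t$ this contribution decays geometrically in $t$ and is absorbed by the initial-condition term of~\eqref{eq:ISpS_max_form}, leaving $\epsilon=0$. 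Accordingly, I would re-run the dissipation argument underlying the proof of Theorem~\ref{thm:MHE_dIOpS} (Appendix~\ref{app:proof_thm_dIOpS}) for this one window, which removes the block-wise recursion and hence also the need for condition~\eqref{eq:M_cond}.

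Concretely: by Assumption~\ref{ass:existence_of_estimator} and Proposition~\ref{prop:existence_dIOOS_Lyap} there is an exponential $\delta$-IOOS Lyapunov function $W_\delta$ as in Definition~\ref{def:dIOOS_Lyap}. Fix $t\in\mathbb{I}_{\ge 1}$. Since the true trajectory $(x_0,\{w_j\}_{j=0}^{t-1},\theta)$ is feasible in~\eqref{eq:MHE_parametric} (in particular $\theta\in\Theta$), optimality bounds the optimal cost $V^*$ by $2\eta^{t}(\|x_0-\bar{x}_0\|_{P_{2,\mathrm{x}}}^2+\|\theta-\bar{\theta}_0\|_{P_{2,\theta}}^2)+2\sum_{j=1}^{t}\eta^{j-1}(\|w_{t-j}-\bar{w}_{t-j}\|_Q^2+\|y_{t-j}-\bar{y}_{t-j}\|_R^2)$. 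Iterating~\eqref{eq:dIOOS_Lyap_2} along the MHE-optimal trajectory over $[0,t]$, using the upper bound in~\eqref{eq:dIOOS_Lyap_1} at time $0$, splitting $\hat{x}_{0|t}^{*}$ and $\hat{\theta}_t^{*}$ against $\bar{x}_0$ and $\bar{\theta}_0$ by the triangle inequality, and bounding the resulting residuals through $V^*$ — the crucial point being that the factor $\eta^{t}$ from the prior weighting in~\eqref{eq:MHE_objective} cancels the $\eta^{-t}$ produced in this step — yields $\|x_t-\hat{x}_t\|_{P_1}^2\le W_\delta(x_t,\hat{x}_{t|t}^{*},\theta,\hat{\theta}_t^{*})$ bounded by $c_1\eta^{t}\|x_0-\bar{x}_0\|^2+c_2\eta^{t}\|\theta-\bar{\theta}_0\|^2+c_3\sum_{j=1}^{t}\eta^{j-1}\|w_{t-j}-\bar{w}_{t-j}\|^2+c_4\sum_{j=1}^{t}\eta^{j-1}\|y_{t-j}-\bar{y}_{t-j}\|^2$ with constants $c_i>0$ independent of $t$. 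Converting the discounted sums to a maximum, taking square roots, and merging the first two terms via $\|x_0-\bar{x}_0\|,\|\theta-\bar{\theta}_0\|\le\|[x_0^\top\ \theta^\top]^\top-[\bar{x}_0^\top\ \bar{\theta}_0^\top]^\top\|$ gives~\eqref{eq:ISpS_max_form} with uniform $C_1,C_2,C_3>0$, rates one can take as $\lambda_1=\sqrt{\eta}$ and $\lambda_2=\lambda_3=\sqrt[4]{\eta}$, and $\epsilon=0$; for $t=0$ the estimator returns $\hat{x}_0=\bar{x}_0$ and the bound is immediate. Hence FIE is exponentially $\delta$-IOS.

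The step I expect to be the main obstacle is verifying that all multiplicative constants stay \emph{uniform} in $t$: unlike in Theorem~\ref{thm:MHE_dIOpS}, where the estimate is propagated block-by-block and~\eqref{eq:M_cond} guarantees the block recursion contracts, here one runs the per-window estimate once but with a window whose length grows with $t$, and the delicate bookkeeping is the exact cancellation of $\eta^{t}$ (prior weight) against $\eta^{-t}$ (from estimating the prior residual through $V^*$); once this is made explicit, no condition analogous to~\eqref{eq:M_cond} is required and the remaining manipulations (sum-to-max conversion, absorbing the vanishing $\|\theta-\bar{\theta}_0\|$-term) are routine. A slightly shorter alternative is to quote the single-window estimate from Appendix~\ref{app:proof_thm_dIOpS} with the horizon symbol set to $t$ and then absorb the resulting $\epsilon_t\propto\sqrt{\eta}^{\,t}\|\bar{\theta}_0-\theta\|$ into the first term of~\eqref{eq:ISpS_max_form}; this route, however, needs a separate treatment of the finitely many small $t$ for which~\eqref{eq:M_cond} with $M=t$ fails, which the direct argument above avoids.
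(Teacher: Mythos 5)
Your proposal is correct and takes essentially the paper's approach: the paper proves the corollary by invoking the final bound~\eqref{eq:MHE_dIOpS_bound} from the proof of Theorem~\ref{thm:MHE_dIOpS} with $M=t$, and your argument simply re-derives the single-window estimate on which that bound rests (the intermediate inequality~\eqref{eq:MHE_bound_on_x_l} with $l=t$, i.e.,~\eqref{eq:MHE_Mt_Lyap} with $M_t=t$), so the parameter-bias term decays like $\eta^{t}$ and is absorbed into the initial-condition term, yielding $\epsilon=0$. Your remark that the direct single-window argument requires no analogue of~\eqref{eq:M_cond} — whereas literally setting $M=t$ in~\eqref{eq:MHE_dIOpS_bound} would need separate handling of the finitely many small $t$ violating that condition — is a fair, minor sharpening of the paper's one-line proof rather than a different route.
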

\begin{proof}
	The proof follows directly from~\eqref{eq:MHE_dIOpS_bound} by setting $M=t$.
\end{proof}

The above corollary shows that the results from~\cite[Thm.~9]{Muntwiler2023} are recovered.

\begin{remark}[Known parameter]
	In case the true parameter is known, i.e., $\bar{\theta}_0 = \theta$, it follows directly from Theorem~\ref{thm:MHE_dIOpS}, and in particular~\eqref{eq:MHE_ISpS_eps_lambda}, that the estimation error~\eqref{eq:MHE_estimate_error} resulting from the proposed MHE approach~\eqref{eq:MHE_parametric} is robustly globally exponentially stable (with $\epsilon = 0$) according to~\cite[Def.~1]{knufer2018robust}.
\end{remark}
\begin{remark}[Parameterization of MHE objective]\label{rem:MHE_objective}
	Choosing $P_{2,\mathrm{x}}$, $P_{2,\theta}$, $R$, and $Q$ in~\eqref{eq:MHE_objective} based on a $\delta$-IOOS Lyapunov function $W_\delta$ (Definition~\ref{def:dIOOS_Lyap}) might seem to limit the applicability of the proposed MHE design and theory since such a Lyapunov function is generally not known explicitly in practical applications.
	However, due to Assumption~\ref{ass:existence_of_estimator} and Proposition~\ref{prop:existence_dIOOS_Lyap} some $W_\delta$ with $\tilde{P}_{2,\mathrm{x}}$, $\tilde{P}_{2,\theta}$, $\tilde{R}$, $\tilde{Q}\succ 0$ and $\tilde{\eta} \in [0,1)$ is guaranteed to exist.
	Due to scale invariance of $W_\delta$, the weighting matrices $P_{2,\mathrm{x}}$, $P_{2,\theta}$, $R$, $Q\succ 0$ in the objective~\eqref{eq:MHE_objective} can be chosen arbitrarily, provided $\eta > \tilde{\eta}$, i.e., $1-\eta > 0$ is sufficiently small, $P_1 \succ 0$ is chosen sufficiently small, and the horizon $M$ is long enough (Condition~\eqref{eq:M_cond} in Theorem~\ref{thm:MHE_dIOpS}), compare also a similar discussion in~\cite[Rem.~1]{Schiller2023}.
\end{remark}
\begin{remark}[Benefits of online parameter adaptation]
	An alternative approach to prevent parameter drift in case of insufficient PE is to \emph{not} adapt/estimate the parameters online, i.e., to apply the MHE approach~\eqref{eq:MHE_parametric} with additional constraint $\hat{\theta}_t = \bar{\theta}_0$.
	However, such an approach can lead to infeasibilities and persistent state estimation errors even with small noise, compare also the numerical comparison in Section~\ref{sec:num}.
\end{remark}
\begin{remark}[Slowly varying parameters] \label{rem:varying_params}
	In practical applications the parameter $\theta$ might be slowly varying over time, e.g., due to aging of components.
	Suppose slow parameter variations, i.e., $\theta_t$ to be constant within one horizon length $M$.
	Then, the proposed MHE approach~\eqref{eq:MHE_parametric} also ensures $\delta$-IOpS w.r.t. to
	\begin{align*}
		\epsilon =&\ \frac{4}{\sqrt{1-\rho}}\sqrt{\frac{\lambda_{\max}(P_{2,\theta})}{\lambda_{\min}(P_1)}}\sqrt{\rho}^M \max_{j\in\mathbb{I}_{[0,t]}}\| \bar{\theta}_0 - \theta_j \|.
	\end{align*}
	Note that, similar to~\eqref{eq:MHE_ISpS_eps_lambda}, $\epsilon$ approaches zero for increasing horizon length $M$.
	Recently, an MHE design to jointly estimate states and time-varying parameters was presented in~\cite{schiller2024moving}, however, with stability properties depending on the level of PE.
\end{remark}
\subsection{Motivating Example Revisited}\label{sec:motivating_example_revisited}

\begin{figure}[t]
	\includegraphics[width=\columnwidth]{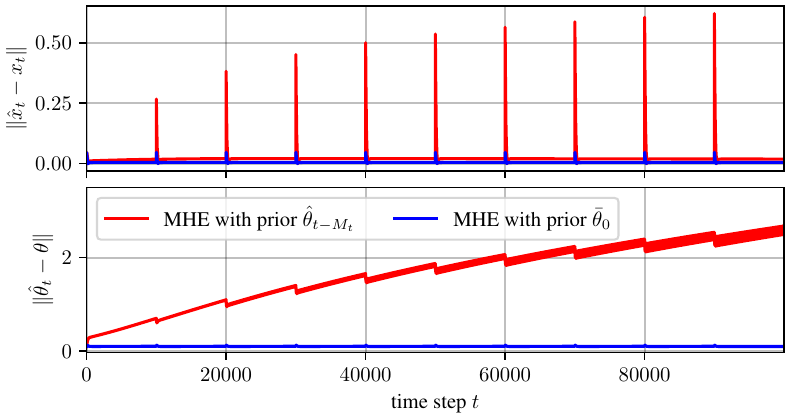}
	\caption{Academic example (Section~\ref{sec:motivating_example}): Absolute state and parameter estimation errors resulting from a standard MHE approach for joint state and parameter estimation with prior on $\hat{\theta}_{t-M_t}$~\eqref{eq:MHE_state_and_parameter} and our proposed MHE approach with prior $\bar{\theta}_0$~\eqref{eq:MHE_parametric}.}
	\label{fig:academic_example_ours}
\end{figure}

In the following, we revisit the academic example system~\eqref{eq:sys_academic} discussed in Section~\ref{sec:motivating_example} and apply the proposed MHE approach~\eqref{eq:MHE_parametric} to it.
We choose the same values for $\bar{\theta}_0$, $M$, $\eta$, $Q$, $R$ as for the standard MHE in Section~\ref{sec:motivating_example}, with $P_{2,\mathrm{x}}=P_{\mathrm{x}}$, and $P_{2,\theta}=P_\theta$.
The corresponding absolute state estimation and parameter errors are shown in Figure~\ref{fig:academic_example_ours}.
For comparison, the estimation results from Section~\ref{sec:motivating_example} using a standard MHE approach for state and parameter estimation~\eqref{eq:MHE_state_and_parameter} are shown again.
As expected from the theoretical analysis (Theorem~\ref{thm:MHE_dIOpS}), the state estimation error resulting from the proposed MHE is significantly smaller.
In particular, the parameter error remains uniformly bounded and the parameter converges to the prior $\bar{\theta}_0$ whenever the data is not informative enough.
\section{Discussion}\label{sec:discussion}

In this section, we discuss the relation of the MHE approach presented in Section~\ref{sec:estimation} to existing methods for state and parameter estimation from the literature.

\subsection{Standard Moving Horizon Estimation}\label{sec:discussion_standard_MHE}
Assuming perfect knowledge of model parameters $\theta$, $\delta$-IOSS is a necessary and sufficient condition for robustly stable state estimation~\cite{Allan2021}.
Furthermore, MHE is robustly stable assuming $\delta$-IOSS and a sufficiently long horizon \cite{Schiller2023,knuefer2021MHE}.
Consequently, a robustly stable MHE approach can be designed whenever a robustly stable state estimator exists.

MHE is often applied to estimate an augmented system state combining both states and parameters~\cite{ Zanon2013,Boegli2013,Poloni2010,Abdollahpouri2017,Russo1999,Kuhl2011,Kupper2009,Valluru2017,Chen2012,Tuveri2023,kleyman2023state,Frick2012}.
Similarly to the state estimation case, the existence of a robustly (uniformly) stable estimator for the augmented system state implies stability of MHE with sufficiently long horizon, compare~\cite{schiller2023nonlinear}.
However, existence of a stable estimator in this setting generally requires satisfaction of some PE condition (cf. Remark~\ref{rem:PE}), which is in general not given.
Lack of PE can lead to parameter drift and unstable state estimates as shown in Section~\ref{sec:motivating_example}.
We overcome this issue by replacing the standard prior weighting on the parameter with a regularization with respect to a (constant) prior parameter estimate $\bar{\theta}_0$.
Only assuming existence of a robustly stable estimator (Assumption~\ref{ass:existence_of_estimator}), the proposed MHE approach is shown to result in practically stable estimates of the state irrespective of PE (Theorem~\ref{thm:MHE_dIOpS}), i.e., the proposed MHE ``performs well'' whenever it is possible to design any observer satisfying the desired stability property.

\subsection{Adaptations of Prior Weighting in Absence of PE}\label{sec:discussion_numerical_tricks}
The prior weighting in the MHE objective~\eqref{eq:MHE_objective_standard} is often chosen such that the solution of the MHE problem is as close as possible to the infinite-horizon (FIE) solution.
In the linear unconstrained case with quadratic cost and no unknown parameters, the prior weighting can be computed using a standard Kalman filter such that the MHE solution matches exactly to the one of FIE~\cite{Rao2001}. 
Thus, the prior weighting in many nonlinear MHE formulations is computed using an extended Kalman filter~\cite{rao2003constrained,Deniz2020,Deniz2021,Deniz2019,Kuhl2011}.
In case of joint state and parameter estimation in the absence of PE, such a prior weighting becomes singular/ill-conditioned.
Thus a minimizer to the optimization problem may not exist, which leads to a lack of robustness both from a robust and statistical point of view~\cite{Cao2022}.
In~\cite{Baumgartner2022}, a regularization of the prior weighting in connection with pseudo measurements was introduced, thus reducing the sensitivity to measurement outliers, compare also~\cite{Sui2011} for a similar approach.
These mechanisms are comparable to forgetting factors in classical adaptive estimation and control~\cite{goodwin2014adaptive}.
However, in contrast to the proposed MHE scheme, the approaches in~\cite{Sui2011,Baumgartner2022} with adapted prior weightings lack (robust) stability guarantees in the absence of PE.

More recently, an MHE approach for joint state and parameter estimation with online verification of PE has been introduced in~\cite{schiller2023nonlinear}.
Thereby, the prior parameter estimate is kept fixed whenever there is insufficient PE as also done in the MHE approach proposed in Section~\ref{sec:MHE_formulation}.
The MHE approach in~\cite{schiller2023nonlinear} offers the possibility to adapt the prior parameter estimate in case of sufficient PE.
However, practical application of the online verification of PE relies on a stricter assumption~\cite[Ass.~12]{schiller2023nonlinear}, which limits applicability to weakly nonlinear systems with small parameter errors.
In contrast, in this paper we rely on a necessary and sufficient condition (Ass.~\ref{ass:existence_of_estimator}) to ensure practically robustly stable state estimates, even in cases of arbitrarily large initial parameter estimation errors.

\subsection{Adaptive Estimation}\label{sec:discussion_adaptive}

\begin{figure}[t]
	\includegraphics[width=\columnwidth]{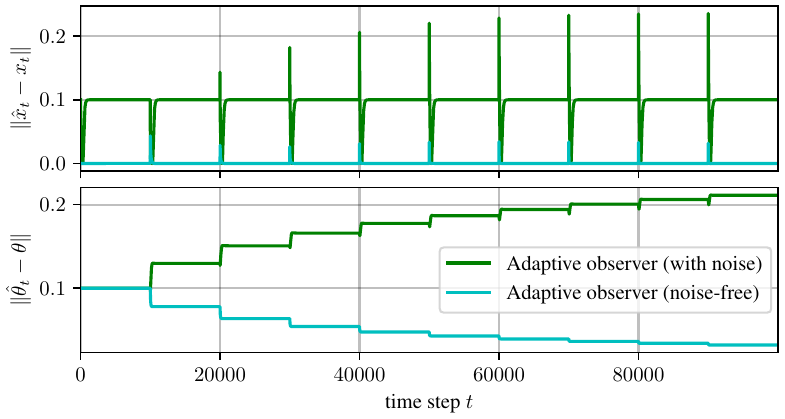}
	\caption{Academic example (Section~\ref{sec:motivating_example}): Absolute state and parameter estimation errors resulting from the adaptive observer proposed in~\cite{Dey2022} for both the noise-free case ($w=0$) and the case with measurement noise.}
	\label{fig:academic_example_adaptive}
\end{figure}

Classically, adaptive observers are applied for state estimation in case of parametric uncertainty.
While there exist adaptive observers with guarantees in the absence of PE~\cite{Marino2001,Tomei2022,Dey2022}, those approaches in general lack robustness against process and measurement noise.
In Figure~\ref{fig:academic_example_adaptive}, we show the application of the adaptive estimation approach\footnote{Application of the adaptive observer from~\cite{Dey2022} requires to transform the system~\eqref{eq:sys_academic} to observable canonical form, consequently having a state of dimension $n_{\mathrm{x}}=2$. We choose the parameters of the observer from~\cite{Dey2022} as $F= a \cdot I_2$ (with $a=0.99$ from the dynamics introduced in Section~\ref{sec:motivating_example}), $R=10\cdot I_2$, and apply covariance resetting~\cite[Sec.~3.1]{Dey2022} with $k_0 = 0.1$ and $k_{\min} = 0.01$.} from~\cite{Dey2022} to the academic example presented in Section~\ref{sec:motivating_example}, which is characterized by measurement noise and absence of PE.
It is visible that the parameter estimate converges in the noise-free case (with $w=0$), resulting in small state estimation errors.
In the noisy case, we observe a similar effect as for the MHE with prior on $\hat{\theta}_{t-M_t}$ as shown in Figure~\ref{fig:academic_example_ours}.
In particular, the slowly diverging parameter estimates lead to large state estimation errors whenever there is enough excitation.
In contrast, the proposed MHE approach is practically robustly stable (Definition~\ref{def:dIOpS}) for any general nonlinear system whenever there exists \emph{any} robustly stable estimator (Assumption~\ref{ass:existence_of_estimator}).
Hence, whenever it is in principle possible to design an adaptive observer with robust stability properties, then, Assumption~\ref{ass:existence_of_estimator} holds and the proposed MHE formulation yields practical robust stability properties.
However, to the best of the author's knowledge, existing adaptive observer designs do not have such properties for general nonlinear systems.

The problem of parameter drift and the related bursting phenomena is well known in the adaptive control and estimation literature~\cite{Rohrs1982,Middleton1988,Sethares1986,Anderson1985} and the interplay between PE and robustness remains actively researched~\cite{Johnstone1982,Wang2020}.
We propose a simple constructive solution to this problem by introducing a regularization cost on the parameter estimate w.r.t. a constant initial parameter estimate $\bar{\theta}_0$.
As shown in~\cite{Ohmori1989}, such a regularization can be related to the classical $\sigma$-modification, compare~\cite[Chap.~8]{narendra2012stable},~\cite{ioannou1986robust}, which ensures boundedness/robustness irrespective of PE.

\subsection{Verification of Assumption~\ref{ass:existence_of_estimator} in Practical Applications} \label{sec:verify_ass}

Our theoretical results in Section~\ref{sec:theoretical_analysis} only require the existence of a robustly stable state estimator (Assumption~\ref{ass:existence_of_estimator}). 
While this assumption is clearly necessary, it is nontrivial to verify whether it holds in practical applications. 
In the following, we show that the existence of a stable estimator (Assumption~\ref{ass:existence_of_estimator}) can be established for a very common class of (mechanical) systems.
In particular, consider a discrete-time system of the form
\begin{align}\label{eq:sys_q}
	\begin{bmatrix}
	p_{t+1} \\ v_{t+1} \\ y_t
	\end{bmatrix} = \begin{bmatrix}
	p_t + \Delta t A(p_t)v_t \\ v_t + \Delta t f(p_t,v_t,u_t,\theta) \\
	p_t
	\end{bmatrix} + w_t,
\end{align}
with state $\begin{bmatrix}p_t^\top & v_t^\top\end{bmatrix}^\top\in\mathbb{R}^{2\cdot n_{\mathrm{p}}}$, control input $u_t\in\mathbb{R}^{n_{\mathrm{u}}}$, parameter $\theta\in\mathbb{R}^{n_{\theta}}$, measurement $y_t\in\mathbb{R}^{n_{\mathrm{p}}}$, process and measurement noise $w_t\in\mathbb{R}^{3\cdot n_{\mathrm{p}}}$, sampling time $\Delta t$, and $A^{-1}(p)$ Lipschitz continuous.
Such a model~\eqref{eq:sys_q} can describe the behavior of many mechanical systems where the position $p$ can be measured, e.g., quadrotors~\cite[Sec.~V.B]{Schiller2023}, and cars~\cite{Carron2022}, compare also the presented numerical example in Section~\ref{sec:num} below.

In the noise-free case (i.e., with $w=0$), the state of system~\eqref{eq:sys_q} can be exactly recovered given two past outputs~\eqref{eq:car_meas}, even without knowledge of the parameters.
Hence, a robustly stable state estimator can be constructed, i.e., Assumption~\ref{ass:existence_of_estimator} is satisfied\footnote{In the noise-free case, the states can be recovered as $p_t = y_t$, $v_t = \frac{1}{\Delta t}A^{-1}(y_t)\left(y_{t+1} - y_t\right)$.
Due to Lipschitz continuity of $A^{-1}$, the state estimation error is bounded proportionally to the magnitude of the process and measurement noise, i.e., Equation~\eqref{eq:ISpS_max_form} holds with arbitrarily small positive constants $\lambda_1, \lambda_2, \lambda_3 > 0$ and $\epsilon = 0$.
However, note that this estimator is \emph{not} causal, which slightly deviates from Definition~\ref{def:state_estimator}.}.
On the other hand, the augmented state $\begin{bmatrix}p_t^\top & v_t^\top & \theta^\top \end{bmatrix}^\top$ of system~\eqref{eq:sys_q} is in general not uniformly detectable, compare, e.g., the numerical example in Section~\ref{sec:num} below.
\section{Practical Application Example}\label{sec:num} 

In this section, we present a numerical example of state estimation under parametric uncertainty for a car in a realistic scenario.
We show that a standard MHE approach for joint state and parameter estimation results in detrimental state estimates due to parameter drift, while the proposed MHE approach ensures robust state estimation.
In addition, we show that the proposed adaptation allows for superior state estimation compared to an MHE with fixed parameter value and consequently a persistent parameter error.
 
\subsection{State Estimation for Car}
We consider a dynamic bicycle model with simplified Pacejka tire force model~\cite{rajamani2011vehicle,Carron2022}.
The continuous-time system is governed by the following differential equations
\begin{subequations}\label{eq:dynamic}
	\begin{align}
		\dot{x}_{\mathrm{p}} &= v_{\mathrm{x}} \cos(\psi) - v_{\mathrm{y}} \sin(\psi),\\
		\dot{y}_{\mathrm{p}} &= v_{\mathrm{x}} \sin(\psi) + v_{\mathrm{y}} \cos(\psi),\\
		\dot{\psi} &= \omega,\\
		\dot{v}_{\mathrm{x}} &= \frac{1}{m} \left( F_{\mathrm{x}} - F_{\mathrm{f}} \sin(\delta) + mv_{\mathrm{y}} \omega\right), \\
		\dot{v}_{\mathrm{y}} &= \frac{1}{m} \left( F_{\mathrm{r}} + F_{\mathrm{f}} \cos(\delta) - mv_{\mathrm{x}} \omega\right), \\
		\dot{\omega} &= \frac{1}{I_{\mathrm{z}}} \left( F_{\mathrm{f}} l_{\mathrm{f}} \cos(\delta) - F_{\mathrm{r}} l_{\mathrm{r}}\right),
	\end{align}
\end{subequations}%
where $x_{\mathrm{p}}$, $y_{\mathrm{p}}$, $\psi$ are the x and y coordinates of the car position and the yaw angle in world frame, $\omega$ is the yaw rate, and $v_{\mathrm{x}}$ and $v_{\mathrm{y}}$ are the longitudinal and lateral velocities in body frame of the car. Further, $m$ is the mass of the car, $I_{\mathrm{z}}$ is the inertia along the $z$-axis, and $l_{\mathrm{f}}$ and $l_{\mathrm{r}}$ are the distance of the front and rear axis from the center of mass, respectively. The lateral tire forces $F_{\mathrm{f}}$ and $F_{\mathrm{r}}$ are modeled with the simplified Pacejka tire force model
\begin{subequations} \label{eq:dynamic_bicycle_lateral_force}
	\begin{align}
		\alpha_{\mathrm{f}} &= \arctan\left( \frac{v_{\mathrm{y}} + \omega l_{\mathrm{f}}}{v_{\mathrm{x}}} \right) - \delta, \label{eq:alpha_f}\\
		\alpha_{\mathrm{r}} &= \arctan\left( \frac{v_{\mathrm{y}} - \omega l_{\mathrm{r}}}{v_{\mathrm{x}}} \right), \label{eq:alpha_r}\\
		F_{\mathrm{f}} &= D_{\mathrm{f}} \sin(C_{\mathrm{f}} \arctan(B_{\mathrm{f}} \alpha_{\mathrm{f}})), \label{eq:F_f} \\
		F_{\mathrm{r}} &= D_{\mathrm{r}} \sin(C_{\mathrm{r}} \arctan(B_{\mathrm{r}} \alpha_{\mathrm{r}})), \label{eq:F_r}
	\end{align}
\end{subequations}
where $\alpha_{\mathrm{f}}$ and $\alpha_{\mathrm{r}}$ are the front and rear slip angles, and $B_{\mathrm{f}},\ B_{\mathrm{r}},\ C_{\mathrm{f}},\ C_{\mathrm{r}},\ D_{\mathrm{f}}$, and $D_{\mathrm{r}}$ are the Pacejka tire model parameters. 
The control input is $u=\begin{bmatrix} \delta & F_{\mathrm{x}}\end{bmatrix}^\top \in \mathbb{R}^2$, where $\delta$ is the steer angle of the front wheels and $F_{\mathrm{x}}$ the longitudinal force generated by the motor, and the overall state is $x = \begin{bmatrix} x_{\mathrm{p}} & y_{\mathrm{p}} & \psi & v_{\mathrm{x}} & v_{\mathrm{y}} & \omega\end{bmatrix}^\top \in \mathbb{X} = \mathbb{R}^6$.
The parameters of the considered system~\eqref{eq:dynamic} are stated in Table~\ref{tab:car_params} and represent a miniature race car from~\cite{Carron2022}.
The Pacejka parameters $D_{\mathrm{f}}$ and $D_{\mathrm{r}}$ are considered to be unknown, i.e., $\theta = \begin{bmatrix} D_{\mathrm{f}} & D_{\mathrm{r}} \end{bmatrix}^\top \in \Theta = \mathbb{R}^2$, with given prior $\bar{\theta}_0 = 2 \cdot \theta$.
The system dynamics~\eqref{eq:dynamic} are discretized using Euler forward with sampling time $\Delta t = 10ms$, and the resulting discrete-time model is subject to uniformly distributed additive process noise $w_{\mathrm{x}}\in\mathbb{W}_{\mathrm{x}}=\{w_{\mathrm{x}}\in\mathbb{R}^6|\|w_{\mathrm{x}}\|_\infty\leq 0.01\}$.

\begin{table}[!b]
	\centering
	\caption{Parameters of miniature car from~\cite{Carron2022}}
	\label{tab:car_params}
	\begin{tabular}{c | c | c | c  }
		$l_{\mathrm{r}}$ & $l_{\mathrm{f}}$ & $m$ & $I_{\mathrm{z}}$ \\
		\hline
		$0.038$ & $0.052$ & $0.181$ & $5.05\cdot 10^{-4}$ 
	\end{tabular} \\ \vspace{2mm}
	\begin{tabular}{ c | c | c | c | c | c }
		$B_{\mathrm{r}}$ & $C_{\mathrm{r}}$ & $D_{\mathrm{r}}$ & $B_{\mathrm{f}}$ & $C_{\mathrm{f}}$ & $D_{\mathrm{f}}$ \\
		\hline
		$8.5$ & $1.45$ & $-1.0$ & $5.2$ & $1.5$ & $-0.65$  
	\end{tabular}
\end{table}

At each time step, we obtain measurements of the position ($x_{\mathrm{p}}, y_{\mathrm{p}}$) and yaw angle $\psi$ of the car, i.e.,
\begin{equation}\label{eq:car_meas}
	y = \begin{bmatrix}
	x_{\mathrm{p}} &
	y_{\mathrm{p}} &
	\psi
	\end{bmatrix}^\top + w_{\mathrm{y}},
\end{equation}
which are subject to uniformly distributed additive measurement noise $w_{\mathrm{y}}\in\mathbb{W}_{\mathrm{y}}=\{w_{\mathrm{y}}\in\mathbb{R}^3|\ | w_{\mathrm{y}} | \le \begin{bmatrix} 0.2 & 0.2 & 0.01 \end{bmatrix}^\top\}$.
Such measurements can be provided by GPS in an outdoor environment or camera-based systems in an indoor environment, compare, e.g.,~\cite{Carron2022}.
The overall process and measurement noise is denoted as $w=\begin{bmatrix}w_{\mathrm{x}}^\top & w_{\mathrm{y}}^\top\end{bmatrix}^\top\in\mathbb{W}=\mathbb{W}_{\mathrm{x}}\times\mathbb{W}_{\mathrm{y}}\subseteq\mathbb{R}^9$.

Note that the augmented state $\begin{bmatrix}x^\top & \theta^\top \end{bmatrix}^\top$ is not uniformly detectable as the friction parameters cannot be estimated for low/zero slip $\alpha_{\mathrm{f}}$ and $\alpha_{\mathrm{r}}$.
In particular, if the car is driving straight ($\omega = 0$, $\delta = 0$, $v_{\mathrm{y}} = 0$), the slip angles~\eqref{eq:alpha_f}-\eqref{eq:alpha_r} and tire forces~\eqref{eq:F_f}-\eqref{eq:F_r} are zero, i.e., no information about $\theta$ can be inferred.
Consequently, \emph{no} robustly stable estimator for the augmented state exists.
On the other hand, the system~\eqref{eq:dynamic}-\eqref{eq:car_meas} is in the form of~\eqref{eq:sys_q} and hence the arguments in Section~\ref{sec:verify_ass} apply, i.e., the derived stability results are applicable.

\subsection{Numerical Estimation Results}

\begin{figure}[t!]
	\includegraphics[width=\columnwidth]{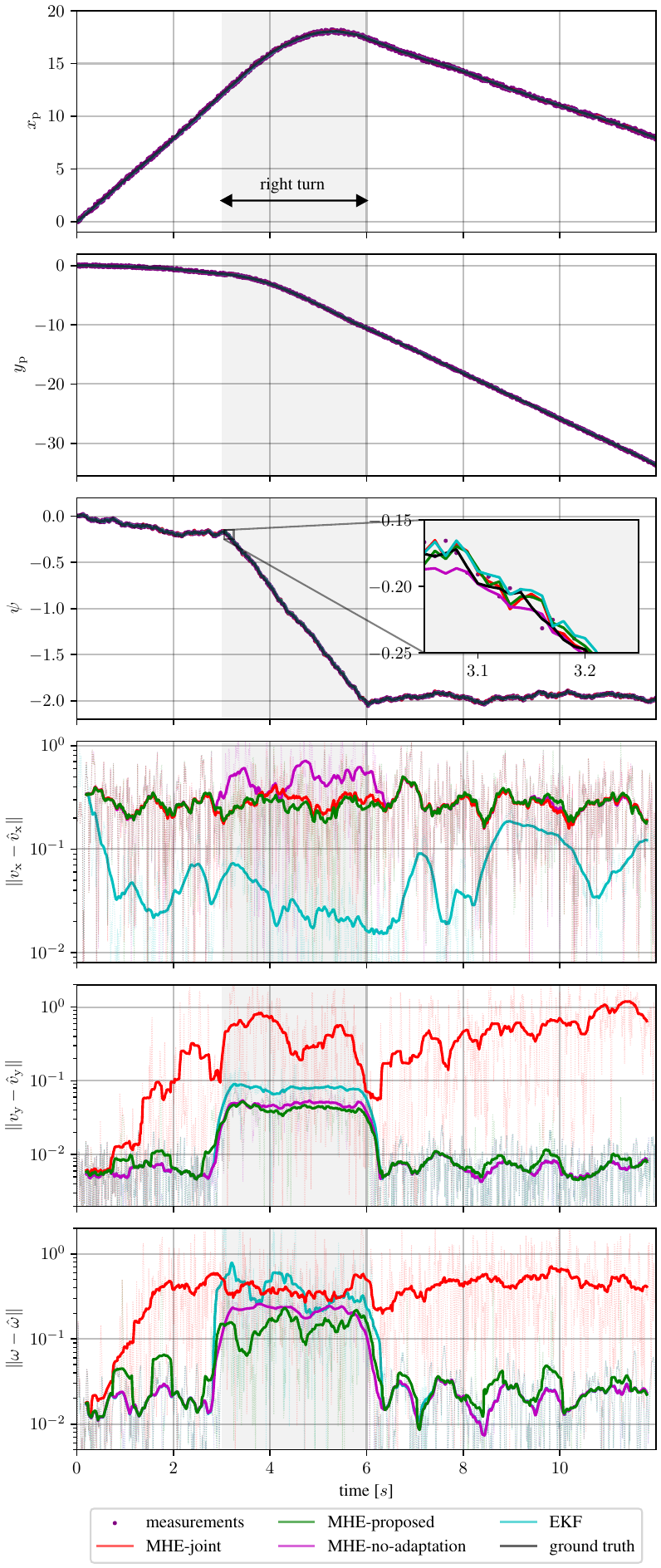}
	\caption{Miniature race car example (Section~\ref{sec:num}): The top three plots show the states $x_{\mathrm{p}},\ y_{\mathrm{p}},\ \psi$ with corresponding sensor measurements and estimates resulting from all three considered MHE approaches and an EKF. The bottom three subplots show the estimation error for $v_{\mathrm{x}},\ v_{\mathrm{y}},\ \omega$ resulting from all three MHE approaches and the EKF (dotted line) and a moving average thereof over $40$ time steps (solid line).}
	\label{fig:car_states}
\end{figure}

\begin{figure}[t!]
	\includegraphics[width=\columnwidth]{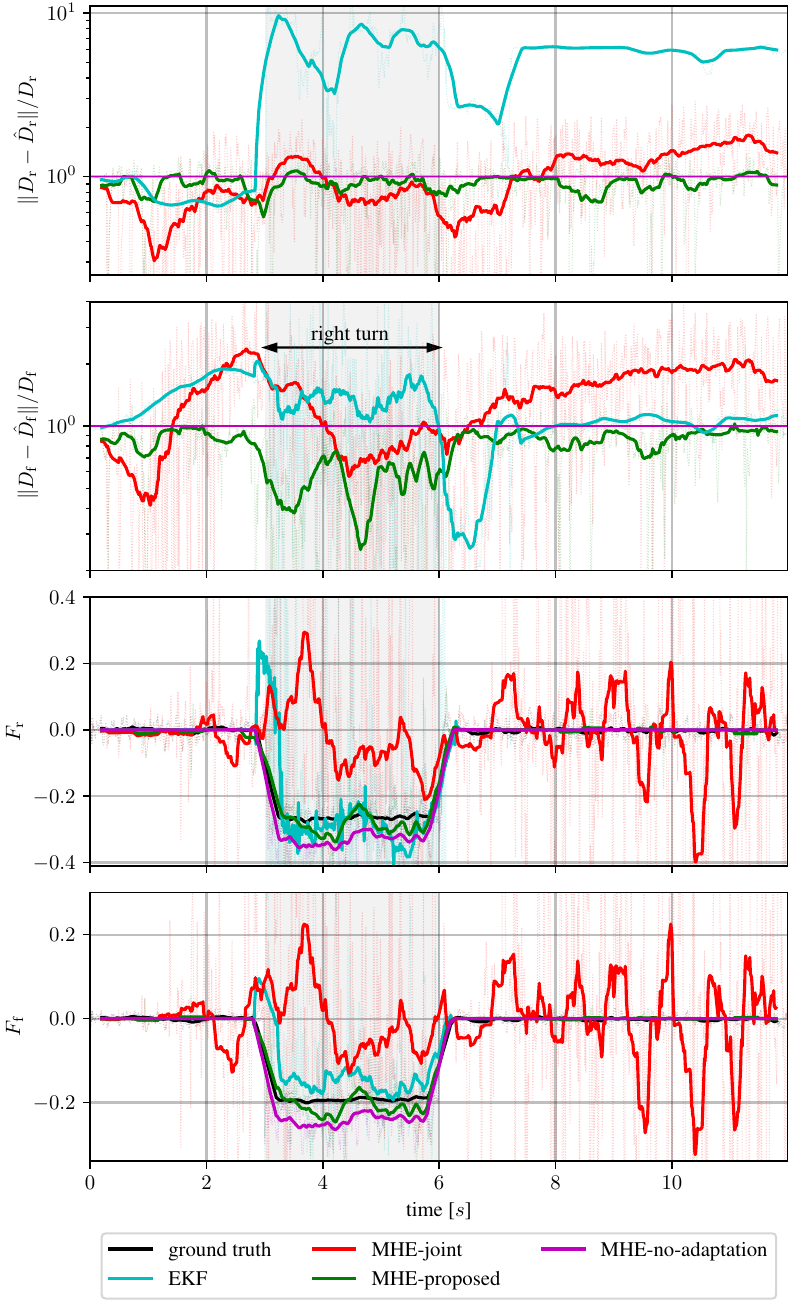}
	\caption{Miniature race car example (Section~\ref{sec:num}): The top two subplots show the normalized parameter estimation errors resulting from all three considered MHE approaches and an EKF (dotted line) and a moving average thereof over $40$ time steps (solid line).
	The bottom two subplots show the tire forces $F_{\mathrm{r}}$ and $F_{\mathrm{f}}$ and the corresponding estimates thereof resulting from all three MHE approaches and the EKF.}
	\label{fig:car_params}
\end{figure}

\begin{figure}[t!]
	\includegraphics[width=\columnwidth]{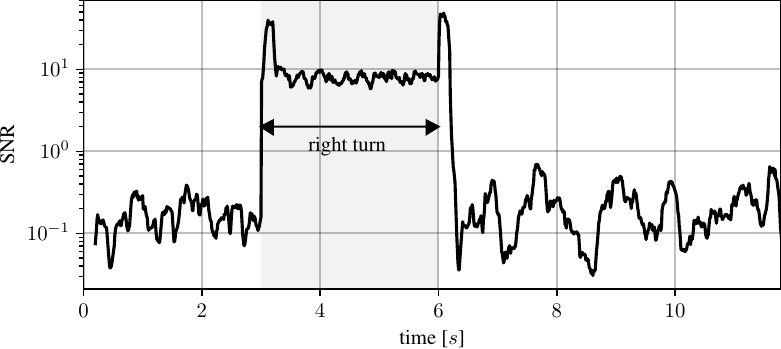}
	\caption{Miniature race car example (Section~\ref{sec:num}): Signal to noise ratio (SNR) of minimum singular value of excitation matrix $PE_t$ divided by the maximum singular value of the empirical process noise covariance $\Sigma_t^{\mathrm{w}}$, compare Remark~\ref{rem:PE}.}
	\label{fig:car_exitation}
\end{figure}

In the following, we consider a scenario where the car is driving straight ($u_t = \begin{bmatrix} 0 & 0.01 \end{bmatrix}^\top$), takes a right turn between time $3s$ and $6s$ ($u_t = \begin{bmatrix} -0.03 & 0.01 \end{bmatrix}^\top$), and drives straight again, see the top three subplots of Figure~\ref{fig:car_states}.
Note that in this scenario, the signal to noise ratio (cf. Remark~\ref{rem:PE}) is small before and after the turn, while it is large during the turn, compare Figure~\ref{fig:car_exitation}.

We compare the following four approaches: 
\begin{enumerate}
	\item \textit{MHE-proposed}: the proposed MHE approach with prior weighting w.r.t. $\bar{\theta}_0$ as defined in Section~\ref{sec:estimation},
	\item \textit{MHE-joint}: a standard MHE approach for joint state and parameter estimation using updated prior $\hat{\theta}_{t-M_t}$ as stated in Section~\ref{sec:standard_MHE},
	\item \textit{MHE-no-adaptation}: an MHE approach for state estimation without parameter adaptation, i.e., the proposed MHE formulation but with fixed parameter estimate $\hat{\theta}_t = \bar{\theta}_0$,
	\item \textit{EKF}: an EKF to estimate the extended state $\tilde{x}_t = \begin{bmatrix} x_t^\top & \theta_t^\top\end{bmatrix}^\top \in \mathbb{R}^8$~\cite{Ljung1979}.
\end{enumerate}
All three MHE approaches are implemented in Python using CasADi~\cite{andersson2019casadi} and the solver IPOPT~\cite{Waechter2005}.
The code is available\footnote{\href{https://gitlab.ethz.ch/ics/parametric-mhe}{https://gitlab.ethz.ch/ics/parametric-mhe}} online.
For the MHE approaches, we choose $Q = \mathrm{diag}(10^{4}\cdot I_6, 25\cdot I_2, 10^{4})$ and $R = \mathrm{diag}(25\cdot I_2, 10^{4})$, reflecting the bounds in $\mathbb{W}$.
Furthermore, we choose $\eta = 0.9$,  $P_{\mathrm{x}} = P_{2,\mathrm{x}} = I_6 $, $P_{\theta} = P_{2,\theta} = I_2$, a horizon $M=20$, and $\bar{w}_t = 0$ and $\bar{y}_t = y_t$ for all $t\in\mathbb{I}_{\ge 0}$.
For the EKF, we choose the covariance matrices as $Q_{\mathrm{EKF}}=\mathrm{diag}(10^{-4}\cdot I_6,2\cdot I_2)$, $R_{\mathrm{EKF}} = R^{-1}/4$, and initialize the estimation error covariance matrix with $P_0 = I_8$.
The system is initialized with $x_0 = \begin{bmatrix} 0 & 0 & 0 & 4 & 0 & 0 \end{bmatrix}^\top$, while the initial estimate is chosen as $\bar{x}_0 = \begin{bmatrix} 0 & 0 & 0 & 3.8 & 0 & 0 \end{bmatrix}^\top$.

Figure~\ref{fig:car_states} shows the ground truth states and state estimation errors resulting from all three MHE approaches as well as the EKF. 
In Figure~\ref{fig:car_params}, the corresponding estimation errors of the unknown parameters $D_{\mathrm{r}}$ and $D_{\mathrm{f}}$, and the estimates of the tire forces $F_{\mathrm{r}}$ and $F_{\mathrm{f}}$ are shown.
Note that in the lower three subplots of Figure~\ref{fig:car_states} and in Figure~\ref{fig:car_params}, the dotted lines indicate the values at each time step, while the solid lines indicate a moving average thereof over $40$ time steps.
As seen in Figure~\ref{fig:car_states}, the first three states ($x_{\mathrm{p}}$, $y_{\mathrm{p}}$, and $\psi$) are accurately estimated from the noisy measurements by all three MHE approaches and the EKF.
The \textit{MHE-proposed} results in smaller estimation errors for the longitudinal, lateral, and angular velocities, $v_{\mathrm{x}}$, $v_{\mathrm{y}}$, $\omega$, respectively, sometimes over one order of magnitude compared to the two other MHE schemes.
While the EKF results in improved estimates of $v_{\mathrm{x}}$ compared to all three MHE approaches, the estimation errors of $v_{\mathrm{y}}$ and $\omega$ are larger during the right turn compared to the \textit{MHE-proposed}.
As seen in Figure~\ref{fig:car_params}, the \textit{MHE-joint} leads to drifting parameter estimates while the car is driving straight, even with the considered unbiased noise distribution.
When the car is turning, this yields bad estimates of the tire forces and consequently large state estimation errors.
Similarly to the \textit{MHE-joint}, the EKF leads to drifting estimates of $D_{\mathrm{f}}$ while the car is driving straight.
In addition, the estimates of $D_{\mathrm{r}}$ resulting from the EKF have a significantly larger error compared to all three MHE approaches.
The \textit{MHE-no-adaptation} leads to inaccurate estimates of the tire forces during the turn, which also results in increased state estimation errors.
In contrast, the \textit{MHE-proposed} prevents parameter drift when driving straight, while the parameter estimates are improved during the turn and consequently lead to accurate estimates of the tire forces and reduced state estimation errors.
Note that the state estimation errors in $v_{\mathrm{y}}$ and $\omega$ resulting from the \textit{MHE-proposed} increase during the turn compared to when the car is driving straight.
This is due to the biased parameter estimates affecting the dynamics of the car in this phase.
However, they remain below the estimation errors resulting from the \textit{MHE-joint} for joint state and parameter estimation and the \textit{MHE-no-adaptation}.

The median computational time\footnote{The time was taken on a laptop with 12-core Intel i7 processor and 32 GB of memory.} of the \textit{MHE-joint}, \textit{MHE-no-adaptation}, and \textit{MHE-proposed} are $9.3ms$, $7.6ms$, and $8.7ms$, respectively, while the considered time step is $10ms$.
For embedded applications, the computational time can be significantly reduced by
relying on tailored compiled solvers, see, e.g., recent experimental applications of a comparable MHE on a miniature race car~\cite{bodmer2024optimization}.

The considered problem of jointly estimating state and friction parameters in a car is of high practical relevance, since friction parameters are difficult to identify and often varying, e.g., due to aging of tires or changes in the road conditions.
The problem is frequently addressed using MHE, see, e.g.,~\cite{Zanon2013,Boegli2013}.
The presented numerical results show that a standard MHE approach for joint state and parameter estimation can lead to large state and parameter estimation errors.
This can have detrimental effects, especially if the resulting state estimate is used to compute control inputs to a system.
As shown, these issues can be overcome by a very simple design change: The regularization w.r.t. a constant prior~$\bar{\theta}_0$ instead of a parameter estimate $\hat{\theta}_{t-M_t}$ obtained online.
While the proposed regularization w.r.t. a constant prior~$\bar{\theta}_0$ ensures robustness of the resulting state estimates, it still allows for sufficient adaptation and consequently improved state estimates compared to an MHE with no parameter adaptation at all.
\section{Conclusion} \label{sec:sum} 
State estimation algorithms require some form of online adaptation to address nonlinear systems with unknown or time-varying parameters.
In this paper, we showed that standard moving horizon estimation (MHE) approaches for joint state and parameter estimation can fail in the absence of persistency of excitation (PE), e.g., if a system is well regulated or over-parametrized.
To overcome this issue, we proposed the use of a regularization based on a (constant) prior estimate of the unknown parameter within an MHE formulation and established practical stability of the resulting state estimate independent of PE for general nonlinear systems.
We discussed how many existing results from MHE and adaptive estimation relate to the proposed MHE approach.
Finally, we showed practical applicability using a car example.

While the proposed regularization on a given prior parameter estimate allows us to establish stability irrespective of PE, it may deteriorate the estimation performance compared to a standard MHE approach for joint state and parameter estimation in case PE is large.
An interesting question for future research is to extend the proposed MHE approach to achieve an estimation performance comparable to a standard MHE scheme for joint state and parameter estimation in case of large PE, while the robust stability properties irrespective of PE of the proposed approach are preserved.

\section*{References}
\bibliographystyle{IEEEtran}  
\bibliography{parametric_MHE}

\begin{IEEEbiography}
	[{\includegraphics[width=1in,height=1.25in,clip,keepaspectratio]{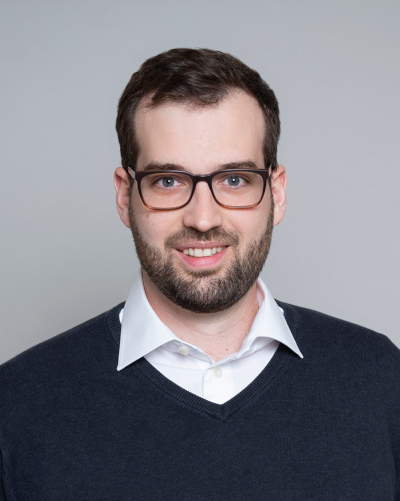}}]{Simon Muntwiler} is currently a sustainability manager in the Office of Sustainability at ETH Zurich, Switzerland. He
	received his Master degree in Robotics, Systems and Control in 2019, and the Ph.D. degree in 2024, both from ETH Zurich, Switzerland.
	During his doctoral studies, he conducted research in the area of optimization- and learning-based state estimation and control algorithms, with application to safety critical systems.
\end{IEEEbiography}

\begin{IEEEbiography}
	[{\includegraphics[width=1in,height=1.25in,clip,keepaspectratio]{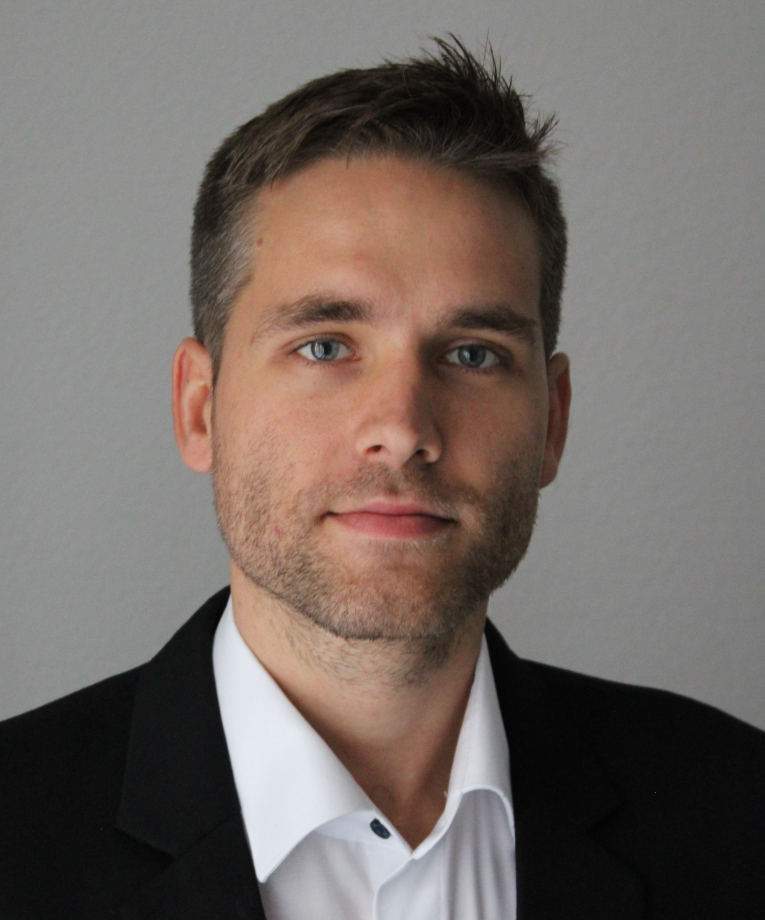}}]{Johannes K\"ohler}
	received his Master degree in Engineering Cybernetics from the University of Stuttgart, Germany, in 2017. In 2021, he obtained a Ph.D. in mechanical engineering, also from the University of Stuttgart, Germany. 
    He is currently a postdoctoral researcher at ETH Zürich, Switzerland. 
    He is the recipient of the 2021 European Systems \& Control PhD Thesis Award, the IEEE CSS George S. Axelby Outstanding Paper Award 2022, and the Journal of process Control Paper Award 2023.
    His research interests are in the area of model predictive control and control and estimation for nonlinear uncertain systems. 
\end{IEEEbiography}

\begin{IEEEbiography}
	[{\includegraphics[width=1in,height=1.25in,clip,keepaspectratio]{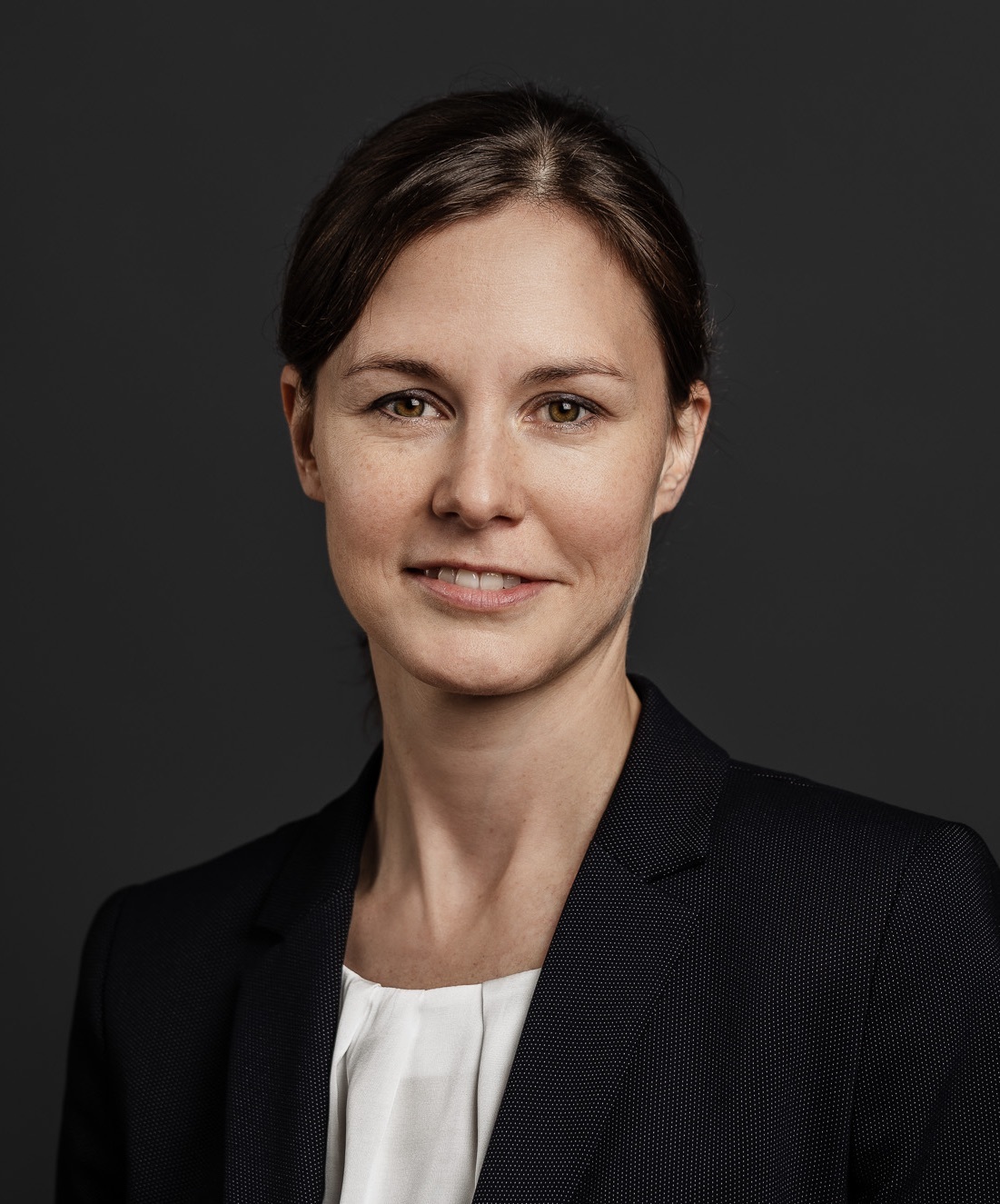}}]{Melanie N. Zeilinger}
 	is an Associate Professor at ETH Zürich, Switzerland. 
	She received the Diploma degree in engineering cybernetics from the University of Stuttgart, Germany, in 2006, and the Ph.D. degree with honors in electrical engineering from ETH Zürich, Switzerland, in 2011. 
	From 2011 to 2012 she was a Postdoctoral Fellow with the Ecole Polytechnique Federale de Lausanne (EPFL), Switzerland.
	She was a Marie Curie Fellow and Postdoctoral Researcher with the Max Planck Institute for Intelligent 	Systems, Tübingen, Germany until 2015 and with the Department of Electrical Engineering and Computer Sciences at the University of California at Berkeley, CA, USA, from 2012 to 2014. 
	From 2018 to 2019 she was a professor at the University of Freiburg, Germany. 
	Her awards include an SNF Professorship grant, the Golden Owl for exceptional teaching at ETH Zurich 2022 and the European Control Award 2023. Her research interests include learning-based control with applications to robotics and biomedical systems.
\end{IEEEbiography}
\appendix
\subsection{Proof of Theorem~\ref{thm:MHE_dIOpS}}\label{app:proof_thm_dIOpS}
This appendix contains the proof of Theorem~\ref{thm:MHE_dIOpS},  which extends ideas from~\cite{Schiller2023} where standard MHE without parametric uncertainty is considered.
We start by noting that a feasible solution to~\eqref{eq:MHE_parametric} is the true state and input sequence, which hence provide an upper bound on the objective~\eqref{eq:MHE_parametric_cost} of the optimal solution:
\begin{align}
	&V_{\mathrm{MHE}}^{\mathrm{x}}(\hat{x}_{t-M_t|t}^*,\hat{w}_{\cdot|t}^*,\hat{y}_{\cdot|t}^*,\hat{\theta}_{t}^*,t) \nonumber \\
	&\quad\le 2\eta^{M_t}\left(\|x_{t-M_t}-\hat{x}_{t-M_t} \|_{P_{2,\mathrm{x}}}^2 + \| \theta - \bar{\theta}_0 \|_{P_{2,\theta}}^2 \right) \label{eq:MHE_proof_objective_bound}  \\
	&\qquad +2\sum_{j=1}^{M_t}\eta^{j-1}\left(\|w_{t-j}-\bar{w}_{t-j}\|_Q^2 +\|y_{t-j}-\bar{y}_{t-j}\|_R^2 \right). \nonumber
\end{align}
Applying Inequality~\eqref{eq:dIOOS_Lyap_2} $M_t$ times we obtain
\begin{align}
W_\delta&(\hat{x}_t,x_t,\hat{\theta}_{t}^*,\theta) \stackrel{\eqref{eq:dIOOS_Lyap_2}}{\le} \eta^{M_t} W_\delta(\hat{x}_{t-M_t|t}^*,x_{t-M_t},\hat{\theta}_{t}^*,\theta) \nonumber \\ 
&+\sum_{j=1}^{M_t}\eta^{j-1}\left(\|\hat{w}^*_{t-j|t}-w_{t-j}\|_Q^2+\|\hat{y}_{t-j|t}^*-y_{t-j}\|_R^2 \right) \nonumber \\
\stackrel{\eqref{eq:dIOOS_Lyap_1}}{\leq}& \eta^{M_t} \left(\|\hat{x}_{t-M_t|t}^*-x_{t-M_t} \|_{P_{2,\mathrm{x}}}^2 + \|\hat{\theta}_{t}^* - \theta \|_{P_{2,\theta}}^2\right) \label{eq:MHE_proof_bound_1} \\
& + \sum_{j=1}^{M_t}\eta^{j-1}\left(\|\hat{w}^*_{t-j|t}-w_{t-j}\|_Q^2 +\|\hat{y}_{t-j|t}^*-y_{t-j}\|_R^2 \right). \nonumber
\end{align}
Applying Cauchy-Schwarz and Young's inequality we obtain
\begin{align*}
\|\hat{x}_{t-M_t|t}^*&-x_{t-M_t} \|_{P_{2,\mathrm{x}}}^2 + \|\hat{\theta}_{t}^* - \theta \|_{P_{2,\theta}}^2 \\
\le& 2 \|\hat{x}_{t-M_t|t}^*-\hat{x}_{t-M_t} \|_{P_{2,\mathrm{x}}}^2 + 2 \|\hat{x}_{t-M_t}-x_{t-M_t} \|_{P_{2,\mathrm{x}}}^2 \\
&+ 2 \| \hat{\theta}_{t}^* - \bar{\theta}_0 \|_{P_{2,\theta}}^2 + 2 \| \bar{\theta}_0 - \theta \|_{P_{2,\theta}}^2.
\end{align*}
Analogously, we have that
\begin{align*}
\|\hat{w}^*_{t-j|t}-w_{t-j}\|_Q^2 \le& 2\|\hat{w}^*_{t-j|t}-\bar{w}_{t-j}\|_Q^2 + 2\|w_{t-j}-\bar{w}_{t-j}\|_Q^2, \\
\|\hat{y}^*_{t-j|t}-y_{t-j}\|_R^2 \le& 2\|\hat{y}^*_{t-j|t}-\bar{y}_{t-j}\|_R^2 + 2\|y_{t-j}-\bar{y}_{t-j}\|_R^2.
\end{align*}
Inserting the above three inequalities in~\eqref{eq:MHE_proof_bound_1}, we arrive at
\begin{align}
W_\delta(\hat{x}_t,&x_t,\hat{\theta}_{t}^*,\theta) \nonumber \\
\stackrel{\eqref{eq:MHE_objective}}{\leq}& V_{\mathrm{MHE}}^{\mathrm{x}}(\hat{x}_{t-M_t|t}^*,\hat{w}_{\cdot|t}^*,\hat{y}_{\cdot|t}^*,\hat{\theta}_{t}^*,t) \nonumber \\
&+  2 \eta^{M_t} \left(\|\hat{x}_{t-M_t}-x_{t-M_t} \|_{P_{2,\mathrm{x}}}^2 + \| \bar{\theta}_0 - \theta \|_{P_{2,\theta}}^2 \right) \nonumber \\
&+ 2\sum_{j=1}^{M_t}\eta^{j-1}\left(\|w_{t-j}-\bar{w}_{t-j}\|_Q^2 +\|y_{t-j}-\bar{y}_{t-j}\|_R^2\right) \nonumber \\
\stackrel{\eqref{eq:MHE_proof_objective_bound}}{\le} & 4 \eta^{M_t} \left(\|\hat{x}_{t-M_t}-x_{t-M_t} \|_{P_{2,\mathrm{x}}}^2 + \| \bar{\theta}_0 - \theta \|_{P_{2,\theta}}^2 \right) \label{eq:MHE_Mt_Lyap} \\
&+ 4\sum_{j=1}^{M_t}\eta^{j-1}\left(\|w_{t-j}-\bar{w}_{t-j}\|_Q^2 +\|y_{t-j}-\bar{y}_{t-j}\|_R^2\right). \nonumber 
\end{align}
We have that
\begin{align*}
\|\hat{x}_{t-M_t}-x_{t-M_t} \|_{P_{2,\mathrm{x}}}^2 \le & \lambda_{\max}(P_{2,\mathrm{x}},P_1)\|\hat{x}_{t-M_t}-x_{t-M_t} \|_{P_1}^2 \\
\stackrel{\eqref{eq:dIOOS_Lyap_1}}{\le} \lambda_{\max}(P_{2,\mathrm{x}}&,P_1)W_\delta(\hat{x}_{t-M_t},x_{t-M_t},\hat{\theta}_{t-M_t}^*,\theta),
\end{align*}
where $\lambda_{\max}(P_{2,\mathrm{x}},P_1) \ge 1$ is the maximum generalized eigenvalue of $P_{2,\mathrm{x}}$ and $P_1$.
Inserting this in~\eqref{eq:MHE_Mt_Lyap} results in
\begin{align}
&W_\delta(\hat{x}_t,x_t,\hat{\theta}_{t}^*,\theta) \le 4 \eta^{M_t} \| \bar{\theta}_0 - \theta \|_{P_{2,\theta}}^2 \nonumber \\
&\quad + 4\sum_{j=1}^{M_t}\eta^{j-1}\left(\|w_{t-j}-\bar{w}_{t-j}\|_Q^2 +\|y_{t-j}-\bar{y}_{t-j}\|_R^2\right) \label{eq:MHE_M_Lyap} \\
&\quad \ +4 \eta^{M_t} \lambda_{\max}(P_{2,\mathrm{x}},P_1)W_\delta(\hat{x}_{t-M_t},x_{t-M_t},\hat{\theta}_{t-M_t}^*,\theta). \nonumber
\end{align}
Consider some time $t = kM + l$ with unique $l\in\mathbb{I}_{[0,M-1]}$ and $k \in \mathbb{I}_{\ge 0}$.
Using Inequality~\eqref{eq:MHE_Mt_Lyap}, we get
\begin{align}
&W_\delta(\hat{x}_l,x_l,\hat{\theta}_{l}^*,\theta) \le 4 \eta^{l} \left(\|\bar{x}_{0}-x_{0} \|_{P_{2,\mathrm{x}}}^2 + \| \bar{\theta}_0 - \theta \|_{P_{2,\theta}}^2\right) \nonumber\\
&\ \ + 4\sum_{j=1}^{l}\eta^{j-1}\left(\|w_{l-j}-\bar{w}_{l-j}\|_Q^2 +\|y_{l-j}-\bar{y}_{l-j}\|_R^2\right). \label{eq:MHE_bound_on_x_l}
\end{align}
Introducing
\begin{align*}
\rho^M \coloneqq 4\eta^{M}\lambda_{\max}(P_{2,\mathrm{x}},P_1) \stackrel{\eqref{eq:M_cond}}{<} 1
\end{align*}
and applying Inequality~\eqref{eq:MHE_M_Lyap} $k$ times using $M_t=M$ we have
\begin{align*}
W_\delta&(\hat{x}_t,x_t,\hat{\theta}_{t}^*,\theta) \le  \rho^{kM} W_\delta(\hat{x}_l,x_l,\hat{\theta}_{l}^*,\theta) \\
& + 4\sum_{i=0}^{k-1}\rho^{iM}\sum_{j=1}^{M}\eta^{j-1}\|w_{t-iM-j}-\bar{w}_{t-iM-j}\|_Q^2 \\
& + 4\sum_{i=0}^{k-1}\rho^{iM}\sum_{j=1}^{M}\eta^{j-1}\|y_{t-iM-j}-\bar{y}_{t-iM-j}\|_R^2 \\
& + \sum_{i=1}^{k}\rho^{iM}\| \bar{\theta}_0 - \theta \|_{P_{2,\theta}}^2 \\
\stackrel{\eqref{eq:MHE_bound_on_x_l}}{\le} & 4 \rho^t \left(\|\bar{x}_{0}-x_{0} \|_{P_{2,\mathrm{x}}}^2 + \| \bar{\theta}_0 - \theta \|_{P_{2,\theta}}^2\right) \\
&+ 4\rho^{kM}\sum_{j=1}^{l}\eta^{j-1}\|w_{l-j}-\bar{w}_{l-j}\|_Q^2 \\
&+ 4\rho^{kM}\sum_{j=1}^{l}\eta^{j-1}\|y_{l-j}-\bar{y}_{l-j}\|_R^2 \\
&+ 4 \sum_{i=0}^{k-1}\sum_{j=1}^{M} \rho^{iM+j-1}\|w_{t-iM-j}-\bar{w}_{t-iM-j}\|_Q^2 \\
&+ 4 \sum_{i=0}^{k-1}\sum_{j=1}^{M} \rho^{iM+j-1}\|y_{t-iM-j}-\bar{y}_{t-iM-j}\|_R^2 \\
&+ \sum_{i=1}^{k}\rho^{iM}\| \bar{\theta}_0 - \theta \|_{P_4}^2 \\
\le & 4 \rho^t \left(\|\bar{x}_{0}-x_{0} \|_{P_{2,\mathrm{x}}}^2 + \| \bar{\theta}_0 - \theta \|_{P_{2,\theta}}^2\right) \\
&+ 4 \sum_{q=0}^{t-1} \rho^{q}\|w_{t-q-1}-\bar{w}_{t-q-1}\|_Q^2 \\
&+ 4 \sum_{q=0}^{t-1} \rho^{q}\|y_{t-q-1}-\bar{y}_{t-q-1}\|_R^2 \\
&+ \frac{\rho^M}{1-\rho^M}\| \bar{\theta}_0 - \theta \|_{P_{2,\theta}}^2,
\end{align*}
where we used that $\eta \le \rho$ and additionally a geometric sum argument in the last step.
For any matrix $P=P^\top \succ 0$ we have that
\begin{align}\label{eq:pd_bounds}
	\lambda_{\min}(P)\|x\|^2 \le \|x\|_P^2 \le \lambda_{\max}(P)\|x\|^2.
\end{align}
Applying the upper bound in~\eqref{eq:pd_bounds} above we obtain
\begin{align*}
W_\delta(\hat{x}_t,x_t,\hat{\theta}_{t}^*,\theta) \le & 4 C_P \rho^t \left\|\begin{bmatrix}
x_0  \\
\theta
\end{bmatrix} - \begin{bmatrix}
\bar{x}_0 \\
\bar{\theta}_0
\end{bmatrix}\right\|^2 \\
&+ 4 \lambda_{\max}(Q)\sum_{q=0}^{t-1} \rho^{q}\|w_{t-q-1}-\bar{w}_{t-q-1}\|^2 \\
&+ 4 \lambda_{\max}(R)\sum_{q=0}^{t-1} \rho^{q}\|y_{t-q-1}-\bar{y}_{t-q-1}\|^2 \\
&+ \lambda_{\max}(P_{2,\theta})\left(\frac{\rho^M}{1-\rho^M}\right)\| \bar{\theta}_0 - \theta \|^2,
\end{align*}
with $C_P = \max\{\lambda_{\max}(P_{2,\mathrm{x}}),\lambda_{\max}(P_{2,\theta})\}$.
Applying the lower bound in~\eqref{eq:dIOOS_Lyap_1} and~\eqref{eq:pd_bounds}, and using the fact that $\sqrt{a+b} \le \sqrt{a} + \sqrt{b}$ for all $a,b \ge 0$, we obtain
\begin{align*}
\|\hat{x}_t-x_t\| \le& \frac{1}{\sqrt{\lambda_{\min}(P_1)}} \sqrt{W_{\delta}(\hat{x}_t,x_t,\hat{\theta}_{t}^*,\theta)} \\
\le& 2 \sqrt{\frac{C_P}{\lambda_{\min}(P_1)}} \sqrt{\rho}^t \left\|\begin{bmatrix}
x_0  \\
\theta
\end{bmatrix} - \begin{bmatrix}
\bar{x}_0 \\
\bar{\theta}_0
\end{bmatrix}\right\| \\
&+ 2 \sqrt{\frac{\lambda_{\max}(Q)}{\lambda_{\min}(P_1)}}\sum_{q=0}^{t-1} \sqrt{\rho}^{q}\|w_{t-q-1}-\bar{w}_{t-q-1}\| \\
&+ 2 \sqrt{\frac{\lambda_{\max}(R)}{\lambda_{\min}(P_1)}}\sum_{q=0}^{t-1} \sqrt{\rho}^{q}\|y_{t-q-1}-\bar{y}_{t-q-1}\| \\
&+ C_\rho\sqrt{\frac{\lambda_{\max}(P_{2,\theta})}{\lambda_{\min}(P_1)}}\sqrt{\rho}^M\| \bar{\theta}_0 - \theta \|,
\end{align*}
where we used that $\nicefrac{1}{(1-\rho^M)} \le \nicefrac{1}{(1-\rho)}$ for $M\in\mathbb{I}_{\ge 1}$ and with $C_\rho = \nicefrac{1}{\sqrt{1-\rho}}$.
This sum-based bound also implies a max bound, compare, e.g.,~\cite[Cor. 1]{Schiller2023},
\begin{align}
\|\hat{x}_t&-x_t\| \nonumber \\
\le& \max\left\{\vphantom{\frac{4}{1-\sqrt[4]{\rho}}}8\sqrt{\frac{C_P}{\lambda_{\min}(P_1)}} \sqrt{\rho}^t \left\|\begin{bmatrix}
x_0  \\
\theta
\end{bmatrix} - \begin{bmatrix}
\bar{x}_0 \\
\bar{\theta}_0
\end{bmatrix}\right\|, \right. \nonumber \\
& \left. \max_{q\in\mathbb{I}_{[0,t-1]}}\left\{\frac{8}{1-\sqrt[4]{\rho}}\sqrt{\frac{\lambda_{\max}(Q)}{\lambda_{\min}(P_1)}}\sqrt[4]{\rho}^{t-j-1}\|w_{j}-\bar{w}_{j}\|\right\}, \right. \nonumber \\
& \left. \max_{q\in\mathbb{I}_{[0,t-1]}}\left\{\frac{8}{1-\sqrt[4]{\rho}}\sqrt{\frac{\lambda_{\max}(R)}{\lambda_{\min}(P_1)}}\sqrt[4]{\rho}^{t-j-1}\|y_{j}-\bar{y}_{j}\|\right\}, \right. \nonumber \\
& \left. \vphantom{\frac{4}{1-\sqrt[4]{\rho}}} 4C_\rho\sqrt{\frac{\lambda_{\max}(P_{2,\theta})}{\lambda_{\min}(P_1)}}\sqrt{\rho}^M\| \bar{\theta}_0 - \theta \| \right\}, \label{eq:MHE_dIOpS_bound} 
\end{align}
where we denoted $j\coloneqq t-q-1$. This shows that~\eqref{eq:ISpS_max_form} is satisfied and consequently, the MHE estimator~\eqref{eq:MHE_parametric} is an exponentially $\delta$-IOpS estimator with respect to $\epsilon$ according to Definition~\ref{def:dIOpS}, which completes the proof. \hfill $\blacksquare$

\end{document}